\documentclass[11pt,authoryear]{article}
\usepackage[latin9]{inputenc}
\usepackage{geometry}
\geometry{verbose,tmargin=1in,bmargin=1in}
\usepackage{color}
\usepackage{amsmath}
\usepackage{amsthm}
\usepackage{amssymb}
\usepackage{graphicx}
\usepackage{setspace}
\usepackage{esint}
\doublespacing

\makeatletter
  \theoremstyle{definition}
  \newtheorem{defn}{\protect\definitionname}
  \theoremstyle{plain}
  \newtheorem{thm}{\protect\theoremname}
 \ifx\proof\undefined\
   \newenvironment{proof}[1][\proofname]{\par
     \normalfont\topsep6\p@\@plus6\p@\relax
     \trivlist
     \itemindent\parindent
     \item[\hskip\labelsep
           \scshape
       #1]\ignorespaces
   }{%
     \endtrivlist\@endpefalse
   }
   \providecommand{\proofname}{Proof}
 \fi
  \theoremstyle{plain}
  \newtheorem{prop}{\protect\propositionname}

\usepackage{amsfonts}
\usepackage{graphicx}\setcounter{MaxMatrixCols}{30}
\usepackage{comment}
\usepackage[titletoc,title]{appendix}
\providecommand{\U}[1]{\protect\rule{.1in}{.1in}}
\RequirePackage[colorlinks,citecolor=blue,urlcolor=blue]{hyperref}
\RequirePackage{hypernat}
\providecommand{\U}[1]{\protect\rule{.1in}{.1in}}
\textwidth=160mm\textheight=218mm \oddsidemargin=3 mm
\evensidemargin=3 mm \topmargin=0 mm
 \markboth{\hfil  \hfil } {\hfil \hfil}
\numberwithin{equation}{section}

\makeatother

\providecommand{\definitionname}{Definition}
\providecommand{\propositionname}{Proposition}
\providecommand{\theoremname}{Theorem}

\begin{document}

\title{\vspace{-1in}
\parbox[c]{1\linewidth}{%
} \bigskip{}
 \\
Dynamic portfolio selection without risk-free assets }

\author{Chi Kin Lam\thanks{Department of Statistics and Actuarial Science, The University of
Hong Kong, Pokfulam Road, Hong Kong, e-mail: cklamsta@hku.hk}, Yuhong Xu\thanks{Mathematical Center for Interdiscipline Research and School of Mathematical
Sciences, Soochow University, Suzhou 215006, People's Republic of
China, e-mail: yhxu@suda.edu.cn.} and Guosheng Yin\thanks{Department of Statistics and Actuarial Science, The University of
Hong Kong, Pokfulam Road, Hong Kong, e-mail: gyin@hku.hk}}

\date{February 13, 2016}

\maketitle
\indent\textbf{Abstract.} We consider the mean--variance portfolio
optimization problem under the game theoretic
framework and without risk-free assets. The problem is solved semi-explicitly by applying the extended
Hamilton--Jacobi--Bellman equation. Although the coefficient of risk aversion in
our model is a constant, the optimal amounts of money invested in
each stock still depend on the current wealth in general. The optimal
solution is obtained by solving a system of ordinary differential
equations whose existence and uniqueness are proved and a numerical
algorithm as well as its convergence speed are provided. Different from portfolio selection with risk-free assets, our value function is quadratic in the current wealth, and the equilibrium allocation is linearly sensitive to the initial wealth. Numerical results show that this model performs better than both the classical one and the variance model in a bull market.\\

\indent\textbf{Key words:} mean--variance portfolio selection, asset
allocation, time-inconsistency, equilibrium control, Hamilton--Jacobi--Bellman
equation. \clearpage{}

\section{Introduction}

The portfolio selection problem deals with how to allocate the wealth
among a set of assets. In his seminal work, Markowitz (1952, 1959)
first proposed the mean--variance (MV) portfolio selection theory
in a single period framework, which is regarded as the cornerstone
in modern finance. In this pioneering theory, an investor aims at
maximizing the mean return and minimizing the variance, which is regarded
as the measurement of risk, at the same time. This problem has multiple
solutions, which comprise the so-called efficient frontier, since
it is a multi-objective optimization problem. In fact, each point
on the efficient frontier is the optimal solution for the single-objective
optimization problem which is to minimize the corresponding variance
subject to a given level of the expected wealth. When short-selling
is allowed and the covariance matrix which is composed by the volatilities
of the stocks is nonnegative definite, the analytic expression of
the mean--variance frontier is derived in Markowitz (1956) and Merton
(1972). In the case where the covariance matrix is non-negative definite,
Perold (1984) describes an algorithm for solving the MV portfolio
selection problem. However, there is a criticism on how the risk is
measured in the original MV framework. For the discussion on the replacement
of the risk measurement, see Markowitz (1959). Besides the variance
of the expected portfolio return, alternative measurements of the
risk such as the semi-variance, the lower partial moment and the downside
risk are proposed for constructing the optimal portfolio, see Konno
and Yamazaki (1991), Markowitz et al. (1993), Zenios and Kang (1993)
and Ogryczak and Ruszczy\'{n}ski (1999).

A main challenge for extending the original single period model to
the multi-period case is the time inconsistency since the Bellman
Optimality Principle is violated. In this case, the optimality of
a control depends on both the current and the initial states. The
concept of the term ``optimality'', as well as ``an optimal control
law'', is therefore unclear. Technically, we cannot apply the dynamic
programming directly to attack this problem.

There are three popular ways for handling a family of the time-inconsistent
problems. The first one, known as the ``pre-commitment'' strategy
in the economics literature, seeks a strategy that optimizes the objective
function at the initial time. Whether it is optimal for the objective
function in the future is disregarded. Here, the interpretation of
``optimal'' is ``optimal from the point of view of the initial time''.
Richardson (1989) and Bajeux-Besnainou and Portait (1998) first develop
a continuous-time version of the MV model under the pre-committed
setting. Another extension to the multi-period version can be found
in Li and Ng (2000). They embed the original time-inconsistent problem
into a class of auxiliary stochastic linear--quadratic (LQ) control
problems. Using the similar technique, Zhou and Li (2000), Lim and
Zhou (2002), Lim (2004), Bielecki et al. (2005), Xia (2005) provide
a solution to the continuous-time MV portfolio selection problem.
With the regime switching, the MV portfolio selection and asset--liability
management problems are studied by Zhou and Yin (2003), Chen et al.
(2008) and Chen and Yang (2011). Dai et al. (2010) provide a pre-committed
strategy when the transaction cost is taken into account.

The second approach for tackling the time inconsistency is that instead
of using strategy that is fixed at the initial time, an investor keeps
updating his wealth allocation in order to optimize the corresponding
objective function at the current time.

The third approach is to treat the time inconsistency seriously. For
this situation, a major challenge is that the dynamic programming
approach cannot be applied directly since the iterated-expectations
property is violated due to the variance term, which is not a linear
function of the expected value of the wealth, involved in the objective
function. Some of the early relevant literatures are Strotz (1955)
and Pollak (1968). In Strotz (1955), the author demonstrates that
if a discount function is applied to consumption plans, a certain
plan which is optimal to an investor at the beginning may not be the
case in the future. However, in certain cases, the strategies developed
in these papers for handling the time inconsistency issue do not exist.
See Peleg and Yaari (1973). In Peleg and Yaari (1973), the time-inconsistent
problems are treated as noncooperative games and the optimal strategies
is described using Nash equilibrium. Within this framework, there
are one player at each time point and every player should find his
own strategy in order to maximize his objective function. In fact,
these players can be viewed as your future incarnations. From this
point of view, Ekeland and Lazrak (2006) and Ekeland and Pirvu (2008)
consider the Merton portfolio management problem in the context of
non-constant hyperbolic discounting in deterministic and stochastic
models respectively. A precise definition of the game theoretic equilibrium
concept in continuous time is provided in these two papers.

Basak and Chabakauri (2010) consider the dynamic mean--variance portfolio
problem in an incomplete-market setting. They derive a recursive formulation
for the mean--variance criteria and obtain the closed-form expression
for its time consistent strategy via the dynamic programming approach.
However, their approach can only be applied to the stochastic control
problem with the MV objective function. For a more general class of
the time inconsistent objective functions, Björk and Murgoci (2010)
and Björk and Murgoci (2014) develop both the discrete-time and continuous-time
theories within a game theoretic framework. They derive an extended
Hamilton--Jacobi--Bellman (HJB) equation and provide the corresponding
verification theorem. As an illustration, besides the MV utility model,
the time-inconsistent control problems with non-exponential discounting
and with the utility function at the terminal time depends on the
current state are also solved using the developed theory. However,
Björk et al. (2014) argued that the optimal control developed in Basak
and Chabakauri (2010), which can be reproduced using the theory in
Björk and Murgoci (2010), is not economically reasonable since it
does not depend on the current wealth state. To construct a more realistic
model, the authors consider the case in which the risk aversion depends
on the current wealth. In particular, if the risk aversion is inversely
proportional to the current wealth state, the optimal amount of money
invested in the risky asset is proportional to the wealth. Under short-selling
prohibition, Bensoussan et al. (2014) study the same problem with
the risk aversion being inversely proportional to the current wealth
in both discrete and continuous time setting and prove that the optimal
control in the discrete time model converges to the one in the continuous
setting.

On the other hand, numerical schemes for determining the pre-commitment
strategy and the time-consistent strategy of a continuous MV asset
allocation problem is proposed by Wang and Forsyth (2010) and Wang
and Forsyth (2011). In their algorithms, any type of constraint can
be applied to the investment behavior. Wang and Forsyth (2012) then
extend the numerical techniques for determining these two policies
in the mean quadratic variation problem.

Besides portfolio selection problems, there are other applications
of the extend HJB equations developed in Björk and Murgoci (2010)
under the mean--variance framework. An equilibrium control for the
asset-liability management problem is derived by Wei et al. (2013).
In addition, the optimal time-consistent investment and reinsurance
strategies using the game theoretic approach are constructed, see
Li and Li (2013), Zeng et al. (2013), Li et al. (2015) and Lin and
Qian (2015).

In this paper, we construct the equilibrium control for the MV asset
allocation problem with multiple assets. We consider three models
and in all of them, the assets an investor can trade are multiple
stocks. In model 1, an additional risk-free bond with a constant interest
rate is included. The objective functions considered in model 1 and
model 3 are the same with the one used in the MV portfolio problem
with constant risk aversion while the one in model 2 only includes
the variance term. In fact, a risk-free asset can be considered as
an asset with zero volatility. From this point of view, model 3 can
be regarded as a generalized version of the ones in Björk and Murgoci
(2010) and Basak and Chabakauri (2010). As the risk aversion goes
to infinity, the equilibrium control derived in model 3 converges
to the one in model 2. Furthermore, although the risk aversion considered
in this paper is a constant, if the risk-free asset is not available,
the optimal amount of money invested in each risky asset still depends
on the current wealth, which is unexpected.

The remainder of the paper is organized as follows. In Section \ref{sec:Problem_Formulation},
we present the formulation of our problem as well as the game theoretic
framework. In Section \ref{sec:3_Models_Description}, we state the
three models with different types of assets and objective functions.
For each model, we state the corresponding extended HJB system. With
suitable \textit{Ansatzs} we can solve each system explicitly. For
model 3, the existence and the uniqueness of the solution is proved.
We also provide a numerical algorithm for computing the solution as
well as its convergence speed. Two special cases are also presented
in Section \ref{sec:3_Models_Description}. In Section \ref{sec:Numerical_Results},
some nature parameter combinations in model 3 is provided. Graphical
illustrations of the three models are also presented for comparison.
Section \ref{sec:Conclusion} concludes this paper. The main technical
proofs of the proposition and theorems are given in appendices.

\section{Problem formulation in a game theoretic framework\label{sec:Problem_Formulation}}

Assume that the state $X_{t}$ (typically the wealth process) at time
$t$ is given by a linear stochastic differential equation:
\[
dX_{t}=\mu(t,X_{t},u_{t})dt+\sigma(t,X_{t},u_{t})dW_{t}
\]
where $\mu,\sigma:[0,T]\times R\times R\to R$ satisfying suitable
conditions such that the stochastic differential equation has a unique
solution.

We first recall the problem formulation from Björk and Murgoci (2010).
For deterministic functions $F(x,y)$ and $G(x,y)$, we consider a
reward function of the form
\[
J(t,x,u)=\mathrm{E}_{t,x}\{F(x,X_{T}^{u})\}+G\{x,\mathrm{E}_{t,x}(X_{T}^{u})\}
\]
where $(t,x)$ is the fixed initial point of time and wealth. \textcolor{black}{It
is pointed out in Björk and Murgoci (2010) that the optimization problem
for maximizing this reward function does not satisfy the Bellman optimality
principle due to the dependence on initial state and the appearance
of the second term which is a nonlinear function of the expectation
and thus is a time-inconsistent problem. Dynamic programming is therefore
not available for solving this problem. }

\textcolor{black}{We can formulate the problem in the game theoretic
framework established in Björk and Murgoci (2010) and construct a
time-consistent optimal strategy rather than a precommited one. Within
this framework, the optimization problem is treated as a non-cooperate
game and at each point of time $t$, there is a player $t$ which
can be regarded as an incarnation of the investor. Then the optimal
time consistent strategy $\hat{\mathbf{u}}$ is defined as: for an
arbitrary time point $t$, the optimal strategy for player $t$ is
$\hat{\mathbf{u}}(t,\cdot)$ suppose that each player $s$ where $s>t$
uses the strategy $\hat{\mathbf{u}}(s,\cdot)$. }

\textcolor{black}{We now provide a formal definition of equilibrium
control adopted in this paper. This definition is given by Björk and
Murgoci (2010). }
\begin{defn}
(Equilibrium Control Law). An admissible control law $\hat{\mathbf{u}}$
is called equilibrium control if for every admissible control law
$\mathbf{u}$ valued in $R$ and $h>0$,
\[
\mathbf{u}_{h}(s,y)=\begin{cases}
\mathbf{u}, & \mbox{for \ensuremath{t\le s<t+h,\quad y\in R^{n}}}\\
\hat{\mathbf{u}} & \mbox{for \ensuremath{t+h\le s\le T,\quad y\in R^{n}}},
\end{cases}
\]
such that
\[
\liminf_{h\to0^{+}}\frac{J(t,x,\hat{\mathbf{u}})-J(t,x,\mathbf{u}_{h})}{h}\ge0
\]
for any $(t,x)\in[0,T]\times R$. The equilibrium value function $V$
is defined as
\[
V(t,x)=J(t,x,\hat{\mathbf{u}}).
\]

\end{defn}

For a control law $\mathbf{u}$, we first define an infinitesimal
operator $\mathbf{A}^{\mathbf{u}}$:
\[
\mathbf{A}^{\mathbf{u}}=\frac{\partial}{\partial t}+\mu(x,\mathbf{u}(t,x))\frac{\partial}{\partial x}+\frac{1}{2}\sigma^{2}(x,\mathbf{u}(t,x))\frac{\partial^{2}}{\partial x^{2}}.
\]

\begin{defn}
\label{def:Extended-HJB-Equation}(Extended HJB Equation). For the
Nash equilibrium problem, the extended HJB system of equations for
$J$ is
\begin{align*}
\sup_{\mathbf{u}\in\mathcal{U}}\{(\mathbf{A}^{\mathbf{u}}V)(t,x)-(\mathbf{A}^{\mathbf{u}}f)(t,x,x)+(\mathbf{A}^{\mathbf{u}}f^{x})(t,x)\\
-\mathbf{A}^{\mathbf{u}}(G\diamond g)(t,x)+G_{y}(x,g(t,x))\cdot\mathbf{A}^{\mathbf{u}}g(t,x)\} & =0,\quad0\le t\le T,\\
\mathbf{A}^{\hat{\mathbf{u}}}f^{y}(t,x) & =0,\quad0\le t\le T,\\
\mathbf{A}^{\mathbf{\hat{\mathbf{u}}}}g(t,x) & =0,\quad0\le t\le T,\\
V(T,x) & =F(x,x)+G(x,x),\\
f(T,x,y) & =F(y,x),\\
g(T,x) & =x.
\end{align*}
In this HJB system, $\hat{\mathbf{u}}$ is the optimal control law
for the first equation. The notations $G_{y}$, $f^{y}$ and $G\diamond g$
are defined as
\begin{align*}
G_{y}(x,y) & =\frac{\partial}{\partial y}G(x,y),\\
f^{y}(t,x) & =f(t,x,y),\\
(G\diamond g)(t,x) & =G(x,g(t,x)).
\end{align*}
For functions $f$ and $g$, we have the following probabilistic interpretations:
\begin{align}
f(t,x,y) & =\mathrm{E}_{t,x}\big\{ F\big(y,X_{T}^{\hat{\mathbf{u}}}\big)\big\},\label{eq:Probabilistic_Interpretation_f}\\
g(t,x) & =\mathrm{\mathrm{E}}_{t,x}(X_{T}^{\hat{\mathbf{u}}}).\label{eq:Probabilistic_Interpretation_g}
\end{align}
\end{defn}
\begin{thm}
(Verification Theorem). Suppose that $(V,f,g)$ is a solution of the
HJB system defined in Definition \ref{def:Extended-HJB-Equation}
and the supremum in the first equation is attained at $\hat{\mathbf{u}}(t,x)$.
Then $\hat{\mathbf{u}}$ is an equilibrium control law and $V(t,x)$
is the corresponding value function. In addition, $f$ and $g$ allow
for the probabilistic interpretations \eqref{eq:Probabilistic_Interpretation_f}
and \eqref{eq:Probabilistic_Interpretation_g}.
\end{thm}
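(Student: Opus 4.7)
The plan is to follow the standard three-step verification programme for extended HJB systems, as developed by Bj\"ork and Murgoci. First I will establish the probabilistic interpretations \eqref{eq:Probabilistic_Interpretation_f} and \eqref{eq:Probabilistic_Interpretation_g}. The equation $\mathbf{A}^{\hat{\mathbf{u}}}g(t,x)=0$ with terminal condition $g(T,x)=x$ is a linear Kolmogorov backward equation; applying It\^o's formula to $s\mapsto g(s,X_{s}^{\hat{\mathbf{u}}})$ on $[t,T]$ and taking expectations (under standing integrability hypotheses so that the stochastic integral is a true martingale) yields $g(t,x)=\mathrm{E}_{t,x}(X_{T}^{\hat{\mathbf{u}}})$. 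Freezing $y$ and running the same argument on $\mathbf{A}^{\hat{\mathbf{u}}}f^{y}=0$ with terminal value $F(y,x)$ gives \eqref{eq:Probabilistic_Interpretation_f}.

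Second, I will prove the identification $V(t,x)=f(t,x,x)+G(x,g(t,x))$, which by Step~1 equals $J(t,x,\hat{\mathbf{u}})$. Evaluating the first HJB equation at $\hat{\mathbf{u}}$ (where the supremum is attained with value $0$) and inserting the vanishing relations $\mathbf{A}^{\hat{\mathbf{u}}}f^{x}=0$ and $\mathbf{A}^{\hat{\mathbf{u}}}g=0$ supplied by the second and third HJB equations gives
\[
\mathbf{A}^{\hat{\mathbf{u}}}V = \mathbf{A}^{\hat{\mathbf{u}}}f(t,x,x) + \mathbf{A}^{\hat{\mathbf{u}}}(G\diamond g).
\]
The right-hand side equals $\mathbf{A}^{\hat{\mathbf{u}}}\bigl[f(t,x,x)+G(x,g(t,x))\bigr]$ once one recognises that the notation $\mathbf{A}^{\hat{\mathbf{u}}}f(t,x,x)$ encodes the full chain rule through both $x$-slots, whereas $\mathbf{A}^{\hat{\mathbf{u}}}f^{x}$ freezes the third argument. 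Hence $V$ and $W(t,x):=f(t,x,x)+G(x,g(t,x))$ satisfy the same linear backward PDE under $\mathbf{A}^{\hat{\mathbf{u}}}$ with identical terminal data $F(x,x)+G(x,x)$, and Feynman--Kac uniqueness forces $V\equiv W\equiv J(t,\cdot,\hat{\mathbf{u}})$.

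Third, I will verify the equilibrium inequality. For an admissible $\mathbf{u}$ and the perturbation $\mathbf{u}_{h}$, the tower property applied at time $t+h$ (after which the system runs under $\hat{\mathbf{u}}$) together with Step~1 gives
\[
J(t,x,\mathbf{u}_{h}) = \mathrm{E}_{t,x}\bigl[f(t+h,X_{t+h}^{\mathbf{u}},x)\bigr] + G\bigl(x,\mathrm{E}_{t,x}[g(t+h,X_{t+h}^{\mathbf{u}})]\bigr).
\]
It\^o's formula on $[t,t+h]$ applied to $f^{x}(\cdot,\cdot)$ and $g(\cdot,\cdot)$ along $X^{\mathbf{u}}$, combined with a first-order Taylor expansion of $G$ in its second argument and Step~2, yields
\[
J(t,x,\mathbf{u}_{h}) = V(t,x) + h\bigl\{\mathbf{A}^{\mathbf{u}}f^{x}(t,x) + G_{y}(x,g(t,x))\,\mathbf{A}^{\mathbf{u}}g(t,x)\bigr\} + o(h).
\]
Substituting $V=f(t,x,x)+G(x,g(t,x))$ into the first HJB equation cancels the $\mathbf{A}^{\mathbf{u}}V$, $-\mathbf{A}^{\mathbf{u}}f(t,x,x)$ and $-\mathbf{A}^{\mathbf{u}}(G\diamond g)$ terms by the same chain-rule identity, so the HJB reduces to $\sup_{\mathbf{u}\in\mathcal{U}}\{\mathbf{A}^{\mathbf{u}}f^{x}+G_{y}\mathbf{A}^{\mathbf{u}}g\}=0$. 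Therefore the bracketed coefficient in the expansion is non-positive for every admissible $\mathbf{u}$, giving $\liminf_{h\to 0^{+}} h^{-1}[J(t,x,\hat{\mathbf{u}})-J(t,x,\mathbf{u}_{h})]\ge 0$, which is the required equilibrium property.

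I expect the main obstacle to be the bookkeeping in Step~2: the two notations $(\mathbf{A}^{\mathbf{u}}f)(t,x,x)$ and $(\mathbf{A}^{\mathbf{u}}f^{x})(t,x)$ differ by chain-rule cross-terms arising from the coincidence of the second and third arguments of $f$, and the extended HJB equation is engineered precisely so that those cross-terms, together with the $G_{y}\mathbf{A}^{\mathbf{u}}g$ contribution from the outer composition, align correctly. A secondary technical point is justifying the interchange of differentiation and expectation in Steps~1 and~3, which requires the standard $C^{1,2}$ smoothness of $V,f,g$ and uniform moment bounds for the controlled state process under the admissible class $\mathcal{U}$.
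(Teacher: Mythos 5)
The paper does not prove this theorem itself---it defers to Section 3 of Bj\"ork and Murgoci (2010)---and your three-step argument (Feynman--Kac identification of $f$ and $g$, the identity $V(t,x)=f(t,x,x)+G(x,g(t,x))$ via the vanishing terms at $\hat{\mathbf{u}}$, and the first-order expansion of $J(t,x,\mathbf{u}_{h})$ against the reduced inequality $\sup_{\mathbf{u}}\{\mathbf{A}^{\mathbf{u}}f^{x}+G_{y}\,\mathbf{A}^{\mathbf{u}}g\}=0$) is precisely the standard verification proof given in that reference. Modulo the integrability and regularity caveats you already flag, your proposal is correct and takes essentially the same route as the proof the paper cites.
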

The proof of the Verification Theorem can be found in Section 3 of
Björk and Murgoci (2010). This theorem states that the solution of
the extended HJB system yields the optimal control and the value function
of the original stochastic control problem.

\section{Portfolio selection\label{sec:3_Models_Description}}

We consider the dynamics for a bank account $B$ and the prices of
risky stocks $S_{i}$:
\begin{align*}
dB_{t} & =rB_{t}dt\\
dS_{it} & =\alpha_{i}S_{it}dt+\sum_{j=1}^{d}\sigma_{ij}S_{it}dW_{jt},\quad i=1,\ldots,n,
\end{align*}
where $r$ is the risk-free rate of a bank account $B$, $\alpha_{i}$
is the mean return of stock $i$ and $\sigma_{ij}$ is the volatility
amount of stock $i$ affected by risk source $j$. Denote $u_{i}$
to be the dollar amount invested in the $i$-th stock and $\mathbf{u_{t}}=(u_{1t},\ldots,u_{nt})$
to be{\tiny{} }the corresponding vector. Two scenarios are considered
in this section: the first one is to include a risk-free asset and
multiple risky assets in our portfolio and the other one is to include
only multiple risky assets in our portfolio. With different objective
functions, we are going to derive the corresponding optimal dollar
amount $\hat{{\bf u}}_{t}$.

If the Brownian motions are correlated such that $\mathrm{E}(dW_{it}dW_{jt})=\rho_{ij}dt$
and $\rho_{ii}=1$, then the covariance between stock prices $S_{it}$
and $S_{jt}$ is
\[
\mathrm{Cov}(S_{it},S_{jt})=S_{i0}S_{j0}e^{(\alpha_{i}+\alpha_{j})t}\big(e^{\sum_{k=1}^{d}\tilde{\sigma}_{ik}\tilde{\sigma}_{jk}t}-1\big),
\]
where $\tilde{\sigma}_{i1}=\sum_{k=1}^{d}\rho_{1k}\sigma_{ik}$ and
$\tilde{\sigma}_{il}=\sqrt{1-\rho_{1l}^{2}}\sigma_{il}$ for $l=2,\ldots,d$.

For simplicity, in the models we are going to discuss, we consider
the case with $n=2$ and $d=2$, i.e.:
\begin{equation}
\begin{cases}
dS_{1t} & =\alpha_{1}S_{1t}dt+\sigma_{11}S_{1t}dW_{1t}+\sigma_{12}S_{1t}dW_{2t},\\
dS_{2t} & =\alpha_{2}S_{2t}dt+\sigma_{21}S_{2t}dW_{1t}+\sigma_{22}S_{2t}dW_{2t},
\end{cases}\label{eq:Stock_Prices_Dynamics}
\end{equation}
In addition, for two Brownian motions $W_{it}$ and $W_{jt}$ with
correlation coefficient $\rho_{ij}$, we can make the transformation
\begin{align*}
W_{it} & =\tilde{W}_{it},\\
W_{jt} & =\rho_{ij}\tilde{W}_{it}+\sqrt{1-\rho_{ij}^{2}}\tilde{W}_{jt},
\end{align*}
where $\mathrm{E}(d\tilde{W}_{it}d\tilde{W}_{jt})=0$. Therefore,
we can assume that the correlation coefficient between two different
Brownian motions is zero, i.e., $\rho_{ij}=0$ for $i\neq j$ and
$\rho_{ii}=1$.

\subsection{Revisit the mean--variance optimization with risk-free asset}

In this section, we revisit the model from Basak and Chabakauri (2010)
in a framework of complete market and do some special analysis which
will be easily compared with the next two models. In our model 1,
a risk-free asset and two risky assets are included in the portfolio.
The objective function is
\[
J(t,x,{\bf u})=\mathrm{E}_{t,x}\left(X_{T}^{\mathbf{u}}\right)-\frac{\gamma}{2}\mathrm{Var}_{t,x}\left(X_{T}^{\mathbf{u}}\right)
\]
where $\gamma\in R$ is the risk aversion coefficient. Denote $\boldsymbol{\sigma}_{i}=(\sigma_{i1},\sigma_{i2})^{\mathrm{T}}$
for $i=1,2$, $\boldsymbol{\sigma}=(\boldsymbol{\sigma}_{1},\boldsymbol{\sigma}_{2})^{\mathrm{T}}$
and $\mathbf{W}_{t}=(W_{1t},W_{2t})^{\mathrm{T}}$. Let $\mathbf{1}=(1,1)^{\mathrm{T}}$.
The dynamic of the investor's wealth is
\[
dX_{t}^{\mathbf{u}}=\big\{ rX_{t}^{\mathbf{u}}+(\boldsymbol{\alpha}-r\mathbf{1})^{\mathrm{T}}\mathbf{\mathbf{u}}_{t}\}dt+\mathbf{u}_{t}^{\mathrm{T}}\boldsymbol{\sigma}d\mathbf{W}_{t}
\]
where $\mathbf{u}_{t}=(u_{1t},u_{2t})^{\mathrm{T}}$ is the vector
of dollar investments in the two stocks at time $t$.

In this case, the functions $F(x)=x-\frac{\gamma}{2}x^{2}$ and $G(x)=\frac{\gamma}{2}x^{2}$.
The corresponding extended HJB equation is given by
\begin{align}
V_{t}+\sup_{\mathbf{u}\in R^{2}}\bigg[\{rx+(\boldsymbol{\alpha}-r\mathbf{1})^{\mathrm{T}}\mathbf{\mathbf{u}}\}V_{x}+\frac{1}{2}\big(V_{xx}-\gamma g_{x}^{2}\big)\mathbf{u}^{\mathrm{T}}\boldsymbol{\sigma}\boldsymbol{\sigma}^{\mathrm{T}}\mathbf{u}\bigg] & =0,\nonumber \\
g_{t}+\{rx+(\boldsymbol{\alpha}-r\mathbf{1})^{\mathrm{T}}\mathbf{\mathbf{u}}\}g_{x}+\frac{1}{2}g_{xx}\mathbf{u}^{\mathrm{T}}\boldsymbol{\sigma}\boldsymbol{\sigma}^{\mathrm{T}}\mathbf{u} & =0,\label{eq:HJB_Model_1}\\
V(T,x) & =x,\nonumber \\
g\left(T,x\right) & =x.\nonumber
\end{align}

Assuming the $2\times2$ matrix $\boldsymbol{\sigma}$ is invertible,
By Section 3.2 in Basak and Chabakauri (2010), we have the solutions
for equation \eqref{eq:HJB_Model_1}:
\begin{align*}
V(t,x) & =e^{r(T-t)}x+\frac{1}{2\gamma}(\boldsymbol{\alpha}-r\mathbf{1})^{\mathrm{T}}(\boldsymbol{\sigma}\boldsymbol{\sigma}^{\mathrm{T}})^{-1}(\boldsymbol{\alpha}-r\mathbf{1})(T-t),\\
g(t,x) & =e^{r(T-t)}x+\frac{1}{\gamma}(\boldsymbol{\alpha}-r\mathbf{1})^{\mathrm{T}}(\boldsymbol{\sigma}\boldsymbol{\sigma}^{\mathrm{T}})^{-1}(\boldsymbol{\alpha}-r\mathbf{1})(T-t),
\end{align*}
and the equilibrium control is
\[
\hat{\mathbf{u}}(t,x)=\frac{1}{\gamma}e^{-r(T-t)}(\boldsymbol{\sigma}\boldsymbol{\sigma}^{\mathrm{T}})^{-1}(\boldsymbol{\alpha}-r\mathbf{1}).
\]

The numbers of risk sources and risky assets affect the equilibrium
drastically, as we will see in the following analysis.

\subsubsection*{One Brownian motion case: }

Suppose that there is only one random factor affecting the market.
Without loss of generality, let $\sigma_{12}=\sigma_{22}=0$ in \eqref{eq:Stock_Prices_Dynamics}.
The wealth process can be written as
\begin{equation}
dX_{t}^{\mathbf{u}}=\{X_{t}^{\mathrm{\mathbf{u}}}r+(\alpha_{1}-r)u_{1}+(\alpha_{2}-r)u_{2}\}dt+(\sigma_{11}u_{1}+\sigma_{21}u_{2})dW_{t}\label{eq:wealth_process_model_1}
\end{equation}
where $W_{t}$ is a 1-dim Brownian motion. \textcolor{black}{From
the first equation of \eqref{eq:HJB_Model_1} and the first order
condition, we have
\[
(\boldsymbol{\sigma}^{\mathrm{T}}\boldsymbol{\sigma})\mathbf{u}=\frac{V_{x}}{\gamma g_{x}^{2}-V_{xx}}(\mathbf{\boldsymbol{\alpha}}-r\mathbf{1}),
\]
or precisely,
\begin{equation}
\begin{cases}
\sigma_{11}u_{1}+\sigma_{21}u_{2} & =\frac{\alpha_{1}-r}{\sigma_{11}}\frac{V_{x}}{\gamma g_{x}^{2}-V_{xx}},\\
\sigma_{11}u_{1}+\sigma_{21}u_{2} & =\frac{\alpha_{2}-r}{\sigma_{21}}\frac{V_{x}}{\gamma g_{x}^{2}-V_{xx}}.
\end{cases}\label{eq:u1_u2_1BM_Model1}
\end{equation}
}
\begin{itemize}
\item[(i)] \textcolor{black}{If the market prices of risk of two stocks are
equal, i.e., $(\alpha_{1}-r)/\sigma_{11}=(\alpha_{2}-r)/\sigma_{21}$,
the optimal amounts of money $\hat{u}_{1}$ and $\hat{u}_{2}$ only
satisfy
\[
\sigma_{11}\hat{u}_{1}+\sigma_{21}\hat{u}_{2}=\frac{\alpha_{1}-r}{\sigma_{11}}\frac{V_{x}}{\gamma g_{x}^{2}-V_{xx}}
\]
which are not unique. With this relationship, we can obtain the solution
for \eqref{eq:HJB_Model_1}:
\begin{align*}
V(t,x) & =e^{r(T-t)}x+\frac{1}{2\gamma}\big(\frac{\alpha_{1}-r}{\sigma_{1}}\big)^{2}(T-t),\\
g(t,x) & =e^{r(T-t)}x+\frac{1}{\gamma}\big(\frac{\alpha_{1}-r}{\sigma_{1}}\big)^{2}(T-t).
\end{align*}
The corresponding linear combination of the optimal amounts of money
is thus
\[
\hat{u}_{1}\sigma_{11}+\hat{u}_{2}\sigma_{21}=\frac{1}{\gamma}\frac{\alpha_{1}-r}{\sigma_{11}}e^{-r(T-t)}.
\]
This is consistent with financial intuition, because with the same
price of market risk, it does not matter to buy one of them more or
less.}
\item[(ii)] \textcolor{black}{If $(\alpha_{1}-r)/\sigma_{11}\neq(\alpha_{2}-r)/\sigma_{21}$,
since $V_{x}\neq0$, there exists no solution for \eqref{eq:u1_u2_1BM_Model1}.}\footnote{\textcolor{black}{If $V_{x}\equiv0$, the first equation of the HJB
system \eqref{eq:HJB_Model_1} becomes $V_{t}\equiv0$ and thus $V(t,x)$
is a constant for all $t\in[0,T]$ and $x\in R$. This contradicts
with the boundary condition $V(T,x)=x$. }}\textcolor{black}{{} We have two prices of market risk. It is clear
to see the case when $\sigma_{11}=\sigma_{21}$ but $\alpha_{1}>\alpha_{2}$.
This obviously implies arbitrage. Intuitively in this circumstance,
one should buy the stock with higher price of market risk, i.e., buy
stock 1 as many as he can. }
\end{itemize}

\subsubsection*{Two Brownian motions case: }

If we still have two risky assets but the uncertainties of the prices
of these two assets are decided by two independent Brownian motions,
i.e., $\sigma_{12}=\sigma_{21}=0$ in \eqref{eq:Stock_Prices_Dynamics}.
Without loss generosity, we further assume that $\sigma_{11}=\sigma_{22}=\sigma$,
i.e., the volatilities of two stocks are the same.
\begin{itemize}
\item[(i)]  If $\alpha_{1}=\alpha_{2}=\alpha$, then
\[
\hat{u}_{1}(t,x)=\hat{u}_{2}(t,x)=\frac{1}{\gamma}e^{-r(T-t)}\frac{\alpha-r}{\sigma^{2}},
\]
i.e., they are the same with the optimal one derived for the situation
when only one stock is available. Note that here the two stocks are
not exactly the same with each other because they are randomized by
two independent Brownian motions, which demonstrates that an investor's
decision will be affected by the appearance parameters $\alpha$ and
$\sigma$.
\item[(ii)]  If $\alpha_{1}\neq\alpha_{2}$, then
\[
\hat{u}_{i}(t,x)=\frac{1}{\gamma}e^{-r(T-t)}\frac{\alpha_{i}-r}{\sigma^{2}},\qquad i=1,2,
\]
i.e., if the two stocks have the same volatility, the money invested
on each stock is positive-related to its appreciation rate.
\end{itemize}

\subsection{The variance model with only two stocks}

In this model, we consider that a risk-loving investor who will not
put money into bank account and that the variance as the objective
function
\[
J(t,x,u)=-\frac{\gamma}{2}\mathrm{Var}_{t,x}(X_{T}^{u}).
\]
Assume there are only two stocks available, we denote the amount of
money invested in stock 1 at time $t$ to be $u_{t}$. The amount
of money invested in stock 2 is thus $X_{t}-u_{t}$ and the dynamic
of the value process of the portfolio is\textcolor{black}{
\begin{equation}
dX_{t}^{u}=\left\{ u_{t}\alpha_{1}+(X_{t}^{u}-u_{t})\alpha_{2}\right\} dt+\left\{ u_{t}\sigma_{11}+(X_{t}^{u}-u_{t})\sigma_{21}\right\} dW_{1t}+\left\{ u_{t}\sigma_{12}+(X_{t}^{u}-u_{t})\sigma_{22}\right\} dW_{2t}.\label{eq:Wealth_Process_Model_2_3}
\end{equation}
}The corresponding extended HJB equation is given by
\begin{equation}
\begin{aligned}V_{t}+\sup_{u\in\mathcal{U}}\bigg(\left\{ x\alpha_{2}+(\alpha_{1}-\alpha_{2})u\right\} V_{x}+\frac{1}{2}\bigg[\left\{ x\sigma_{21}+\left(\sigma_{11}-\sigma_{21}\right)u\right\} ^{2}\\
+\left\{ x\sigma_{22}+\left(\sigma_{12}-\sigma_{22}\right)u\right\} ^{2}\bigg]\left(V_{xx}-\gamma g_{x}^{2}\right)\bigg) & =0,\\
g_{t}+\left\{ x\alpha_{2}+\left(\alpha_{1}-\alpha_{2}\right)u\right\} g_{x}+\frac{1}{2}\bigg[\left\{ x\sigma_{21}+\left(\sigma_{11}-\sigma_{21}\right)u\right\} ^{2}\\
+\left\{ x\sigma_{22}+\left(\sigma_{12}-\sigma_{22}\right)u\right\} ^{2}\bigg]g_{xx} & =0,\\
V(T,x) & =0,\\
g\left(T,x\right) & =x.
\end{aligned}
\label{eq:HJB_Model_2}
\end{equation}
For the optimal solution $\hat{u}$, we make the \textit{Ansatz:}\footnote{As to why make this \textit{Ansatz}, we have first tried the form
$\hat{u}(t,x)=k(t)x+c(t)$ which leads to a contradiction when putting
$g(t,x)=a(t)x$, $V(t,x)=A(t)x$ into the HJB system \eqref{eq:HJB_Model_2}.}
\begin{equation}
\hat{u}(t,x)=k(t)x.\label{eq:optimal_allocation_Model_2}
\end{equation}
By substituting equation (\ref{eq:optimal_allocation_Model_2}) into
equation (\ref{eq:Wealth_Process_Model_2_3}), we obtain the  wealth
process:
\[
dX_{t}^{\hat{u}}=\{\alpha_{2}+(\alpha_{1}-\alpha_{2})k(t)\}X_{t}^{\hat{u}}dt+\{\sigma_{21}+(\sigma_{11}-\sigma_{21})k(t)\}X_{t}^{\hat{u}}dW_{1t}+\{\sigma_{22}+(\sigma_{12}-\sigma_{22})k(t)\}X_{t}^{\hat{u}}dW_{2t}.
\]
From this equation, we can obtain the expected values:
\begin{align*}
\mathrm{E}_{t,x}\big(X_{T}^{\hat{u}}\big) & =e^{\int_{t}^{T}\left\{ \alpha_{2}+k\left(s\right)\left(\alpha_{1}-\alpha_{2}\right)\right\} ds}x,\\
\mathrm{E}_{t,x}\big\{\big(X_{T}^{\hat{u}}\big)^{2}\big\} & =e^{2\int_{t}^{T}\big[\alpha_{2}+k\left(s\right)\left(\alpha_{1}-\alpha_{2}\right)+0.5\left\{ \sigma_{21}+k(s)(\sigma_{11}-\sigma_{21})\right\} ^{2}+0.5\left\{ \sigma_{22}+k(s)(\sigma_{12}-\sigma_{22})\right\} ^{2}\big]ds}x^{2}.
\end{align*}
Therefore, the conditional variance of wealth is
\begin{align*}
\mathrm{Var}_{t,x}\big(X_{T}^{\hat{u}}\big) & =\mathrm{E}_{t,x}\big\{\big(X_{T}^{\hat{u}}\big)^{2}\big\}-\big\{\mathrm{E}_{t,x}\big(X_{T}^{\hat{u}}\big)\big\}^{2}\\
 & =e^{2\int_{t}^{T}\left\{ \alpha_{2}+k\left(s\right)\left(\alpha_{1}-\alpha_{2}\right)\right\} ds}\left(e^{\intop_{t}^{T}\left[\left\{ \sigma_{21}+k(s)(\sigma_{11}-\sigma_{21})\right\} ^{2}+\left\{ \sigma_{22}+k(s)(\sigma_{12}-\sigma_{22})\right\} ^{2}\right]ds}-1\right)x^{2}.
\end{align*}
The solution for equation \eqref{eq:HJB_Model_2} is given by
\begin{align*}
g(t,x) & =\mathrm{E}_{t,x}\big(X_{T}^{\hat{u}}\big)\\
 & =e^{\int_{t}^{T}\left\{ \alpha_{2}+k\left(s\right)\left(\alpha_{1}-\alpha_{2}\right)\right\} ds}x,\\
V(t,x) & =-\frac{\gamma}{2}\mathrm{Var}_{t,x}\big(X_{T}^{\hat{u}}\big)\\
 & =\frac{\gamma}{2}e^{2\intop_{t}^{T}\left\{ \alpha_{2}+k\left(s\right)\left(\alpha_{1}-\alpha_{2}\right)\right\} ds}\bigg(1-e^{\intop_{t}^{T}\left[\left\{ \sigma_{21}+k(s)(\sigma_{11}-\sigma_{21})\right\} ^{2}+\left\{ \sigma_{22}+k(s)(\sigma_{12}-\sigma_{22})\right\} ^{2}\right]ds}\bigg)x^{2}.
\end{align*}
Denote
\begin{align*}
a(t) & =e^{\int_{t}^{T}\left\{ \alpha_{2+}k\left(s\right)\left(\alpha_{1}-\alpha_{2}\right)\right\} ds},\\
A(t) & =\frac{\gamma}{2}e^{2\intop_{t}^{T}\left\{ \alpha_{2}+k\left(s\right)\left(\alpha_{1}-\alpha_{2}\right)\right\} ds}\bigg(1-e^{\intop_{t}^{T}\left[\left\{ \sigma_{21}+k(s)(\sigma_{11}-\sigma_{21})\right\} ^{2}+\left\{ \sigma_{22}+k(s)(\sigma_{12}-\sigma_{22})\right\} ^{2}\right]ds}\bigg),
\end{align*}
then $V(t,x)=A(t)x^{2}$ and $g(t,x)=a(t)x$.
\begin{thm}
Assume the volatilities of the two stocks are not exactly the same,
i.e., $(\sigma_{11}-\sigma_{21})^{2}+(\sigma_{12}-\sigma_{22})^{2}\ne0$,
from HJB equation \eqref{eq:HJB_Model_2} and the first order condition,
we obtain the expression of the optimal allocation
\begin{align*}
\hat{u}(t,x) & =\frac{(\alpha_{2}-\alpha_{1})V_{x}-\{\sigma_{21}(\sigma_{11}-\sigma_{21})+\sigma_{22}(\sigma_{12}-\sigma_{22})\}(V_{xx}-\gamma g_{x}^{2})x}{\{(\sigma_{11}-\sigma_{21})^{2}+(\sigma_{12}-\sigma_{22})^{2}\}(V_{xx}-\gamma g_{x}^{2})}\\
 & =\frac{2(\alpha_{2}-\alpha_{1})A(t)-\{\sigma_{21}(\sigma_{11}-\sigma_{21})+\sigma_{22}(\sigma_{12}-\sigma_{22})\}\big\{2A(t)-\gamma a^{2}(t)\big\}}{\{(\sigma_{11}-\sigma_{21})^{2}+(\sigma_{12}-\sigma_{22})^{2}\}\big\{2A(t)-\gamma a^{2}(t)\big\}}x\\
 & =k(t)x,
\end{align*}
where $k(\cdot)$ satisfies the following ordinary differential equation
(ODE):\footnote{The proof of the uniqueness and existence of a solution to \eqref{eq:k_for_model_2}
is the same as that for the ODE of $k_{1}(t)$, which is given in
Appendix \ref{sec: Appendix_B}. }
\begin{align}
k(t) & =\frac{1}{\left(\sigma_{11}-\sigma_{21}\right)^{2}+\left(\sigma_{12}-\sigma_{22}\right)^{2}}\Bigg\{\left(\alpha_{1}-\alpha_{2}\right)\bigg(e^{-\intop_{t}^{T}\left[\left\{ \sigma_{21}+k(s)(\sigma_{11}-\sigma_{21})\right\} ^{2}+\left\{ \sigma_{22}+k(s)(\sigma_{12}-\sigma_{22})\right\} ^{2}\right]ds}\nonumber \\
 & \;\;\;\;-1\bigg)-\sigma_{21}\left(\sigma_{11}-\sigma_{21}\right)-\sigma_{22}\left(\sigma_{12}-\sigma_{22}\right)\Bigg\}.\label{eq:k_for_model_2}
\end{align}

\end{thm}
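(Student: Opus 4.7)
The plan is to derive the optimal feedback in two stages: first obtain $\hat{u}$ in terms of the derivatives of $V$ and $g$ via a first-order condition on the supremum in \eqref{eq:HJB_Model_2}, and then substitute the \emph{Ansatz} $V(t,x)=A(t)x^{2}$, $g(t,x)=a(t)x$ to rewrite the feedback as $k(t)x$ and extract the integral equation \eqref{eq:k_for_model_2} for $k$.

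First I would differentiate the expression inside the supremum of the first equation in \eqref{eq:HJB_Model_2} with respect to $u$, yielding
\begin{equation*}
(\alpha_{1}-\alpha_{2})V_{x}+\Big[x\{\sigma_{21}(\sigma_{11}-\sigma_{21})+\sigma_{22}(\sigma_{12}-\sigma_{22})\}+u\{(\sigma_{11}-\sigma_{21})^{2}+(\sigma_{12}-\sigma_{22})^{2}\}\Big](V_{xx}-\gamma g_{x}^{2})=0.
\end{equation*}
Under the standing hypothesis $(\sigma_{11}-\sigma_{21})^{2}+(\sigma_{12}-\sigma_{22})^{2}\neq 0$ and provided $V_{xx}-\gamma g_{x}^{2}\neq 0$ (to be verified by the \emph{Ansatz}), algebraically solving for $u$ produces the first displayed expression for $\hat{u}(t,x)$ in the theorem.

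Second I would insert $V_{x}=2A(t)x$, $V_{xx}=2A(t)$, and $g_{x}=a(t)$ into this feedback. The variable $x$ factors cleanly out of both summands in the numerator, giving the $(A,a)$-form of $\hat{u}(t,x)=k(t)x$ on the second line of the theorem. To convert this into the self-referential equation \eqref{eq:k_for_model_2}, I would invoke the explicit representations of $a(t)$ and $A(t)$ already recorded in the excerpt (obtained from the conditional moments of $X_{T}^{\hat{u}}$ once $\hat{u}=kx$ is substituted into \eqref{eq:Wealth_Process_Model_2_3}). Writing $I(t)=\int_{t}^{T}[\{\sigma_{21}+k(s)(\sigma_{11}-\sigma_{21})\}^{2}+\{\sigma_{22}+k(s)(\sigma_{12}-\sigma_{22})\}^{2}]ds$, those representations give $A(t)=\tfrac{\gamma}{2}a^{2}(t)\{1-e^{I(t)}\}$ and hence $2A(t)-\gamma a^{2}(t)=-\gamma a^{2}(t)e^{I(t)}$. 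The common factor $\gamma a^{2}(t)$ then cancels between the numerator and denominator of $k(t)$, and using $(e^{I(t)}-1)/e^{I(t)}=1-e^{-I(t)}$ the resulting expression collapses to exactly \eqref{eq:k_for_model_2}.

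The main obstacle is bookkeeping rather than anything conceptual. The one structural point to verify is that $V_{xx}-\gamma g_{x}^{2}=-\gamma a^{2}(t)e^{I(t)}<0$, which both legitimises the division in the first-order step and ensures the critical point is genuinely the maximiser in the HJB supremum (so the second-order condition holds). Existence and uniqueness of a continuous solution to \eqref{eq:k_for_model_2} are not treated in this theorem; as the footnote indicates, they follow verbatim from the argument given in Appendix \ref{sec: Appendix_B} for the analogous quantity $k_{1}$ in model 3.
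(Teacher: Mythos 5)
Your proposal is correct and follows essentially the same route as the paper: apply the first-order condition to the supremum in \eqref{eq:HJB_Model_2}, substitute the \emph{Ansatz}-derived forms $V(t,x)=A(t)x^{2}$, $g(t,x)=a(t)x$ with the explicit expressions for $A(t)$ and $a(t)$ obtained from the conditional moments of $X_{T}^{\hat u}$, and simplify using $2A(t)-\gamma a^{2}(t)=-\gamma a^{2}(t)e^{I(t)}$ to reach \eqref{eq:k_for_model_2}. Your explicit observation that $V_{xx}-\gamma g_{x}^{2}<0$, which justifies the division and confirms the critical point is the maximiser, is a worthwhile check that the paper leaves implicit.
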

In the case where the first asset is a stock with $\alpha_{1}=\alpha$,
$\sigma_{11}=\sigma>0$, $\sigma_{12}=0$ and the second asset degenerates
to risk-free asset with $\alpha_{2}=r$, $\sigma_{21}=\sigma_{22}=0$,
this equation has a unique solution $k(t)\equiv0$ for all $t\in[0,T]$,
i.e., we do not invest any money into the risky asset. This is reasonable
because in this model, the only consideration for an investor to construct
his portfolio is to minimize his risk, therefore he invests all of
his wealth into the risk-free asset in order to avoid taking any risk.

\subsubsection*{One Brownian motion case}

\textcolor{black}{Suppose that there is only one random factor affecting
the market. Without loss of generality, let $\sigma_{12}=\sigma_{22}=0$.
By applying the first order condition to the first equation of the
HJB system \eqref{eq:HJB_Model_2}, we have:
\begin{equation}
\{(\sigma_{11}-\sigma_{21})^{2}+(\sigma_{12}-\sigma_{22})^{2}\}u=(\alpha_{1}-\alpha_{2})\frac{V_{x}}{\gamma g_{x}^{2}-V_{xx}}-\{\sigma_{21}(\sigma_{11}-\sigma_{21})+\sigma_{22}(\sigma_{12}-\sigma_{22})\}x.\label{eq:u_1BM_Model_2_3}
\end{equation}
}
\begin{itemize}
\item[(i)] \textcolor{black}{ If $\sigma_{11}=\sigma_{21}$ and $\alpha_{1}=\alpha_{2}$,
\eqref{eq:u_1BM_Model_2_3} is always true no matter what real number
$u$ takes as the two sides of this equation will always be 0. In
fact, these two stocks are ``the same''. }
\item[(ii)] \textcolor{black}{ If $\sigma_{11}=\sigma_{21}$ but $\alpha_{1}\neq\alpha_{2}$,
since $V_{x}\neq0$, no solution exists for \eqref{eq:u_1BM_Model_2_3}.}\footnote{\textcolor{black}{By substituting $V_{x}\equiv0$ into the first equation
of the HJB system \eqref{eq:HJB_Model_2}, we obtain $V_{t}=\frac{1}{2}\gamma\sigma_{11}^{2}x^{2}g_{x}^{2}$
for any function $u$. Therefore, we can set $u=0$ in the second
equation of \eqref{eq:HJB_Model_2} and the corresponding solution
is $g(t,x)=xe^{\alpha_{2}(T-t)}$. With this, we have $V_{t}=\frac{1}{2}\gamma\sigma_{11}^{2}e^{2\alpha_{2}(T-t)}x^{2}$
which leads to a contradiction with $V_{x}\equiv0$. }}
\end{itemize}

\subsubsection*{Two Brownian motions case:}

If uncertainties of the prices of the two assets are decided by two
independent Brownian motions with the same volatility amount, i.e.,
$\sigma_{11}=\sigma_{22}=\sigma$ and $\sigma_{12}=\sigma_{21}=0$,
the optimal allocation is $\hat{u}(t,x)=k(t)x$ where
\[
k(t)=\frac{\alpha_{1}-\alpha_{2}}{2\sigma_{11}^{2}}\big[e^{-\int_{t}^{T}\sigma_{11}^{2}\{2k^{2}(s)-2k(s)+1\}ds}-1\big]+\frac{1}{2}.
\]
At time $t\in[0,T]$, the expectation of the wealth at the end of
the time period is given by
\begin{equation}
\mathrm{E}_{t,x}(X_{T}^{u})=xe^{\int_{t}^{T}\left\{ \alpha_{2}+k\left(s\right)\left(\alpha_{1}-\alpha_{2}\right)\right\} ds},\label{eq:Exp_Scenario2_model_2}
\end{equation}
while its variance is
\begin{equation}
\mathrm{Var}_{t,x}(X_{T}^{u})=x^{2}e^{2\int_{t}^{T}\left\{ \alpha_{2}+k\left(s\right)\left(\alpha_{1}-\alpha_{2}\right)\right\} ds}\big[e^{\int_{t}^{T}\sigma_{11}^{2}\{2k^{2}(s)-2k(s)+1\}ds}-1\big].\label{eq:Var_Scenario2_model_2}
\end{equation}

\begin{itemize}
\item[(i)]  If $\alpha_{1}=\alpha_{2}=\alpha$, then
\[
\hat{u}_{1}(t,x)=\hat{u}_{2}(t,x)=\frac{1}{2}x,
\]
i.e., the amounts of money invested on two stocks are both a half
at any time $t$ and without short-selling. This is reasonable from
the point of finance: there is no rank between the two random factors
(Brownian motions), and the two stocks perform at the same level ($\sigma_{11}=\sigma_{22}$,
$\alpha_{1}=\alpha_{2}$). So there is no reason to put more emphasis
on one stock. But things become different for the next two cases.
\item[(ii)]  if $\alpha_{1}>\alpha_{2}$, then
\[
\hat{u}_{1}(t,x)<\frac{1}{2}x<\hat{u}_{2}(t,x).
\]

\item[(iii)]  if $\alpha_{1}<\alpha_{2}$, then
\[
\hat{u}_{1}(t,x)>\frac{1}{2}x>\hat{u}_{2}(t,x).
\]

\end{itemize}
From (ii) and (iii), we can see that under the same volatility, the
amount of money invested into the stock with higher appreciation rate
is less than that invested into the stock with lower appreciation
rate. This leads to no contradiction under the criteria of variance
because a large $\alpha$ yields a large variance.

\subsection{Mean--variance criteria without bank account}

For the third model, as in model 2, the portfolio only includes two
risky assets. However, the objective function is the same with model
1, i.e.,
\[
J(t,x,u)=\mathrm{E}_{t,x}(X_{T}^{u})-\frac{\gamma}{2}\mathrm{Var}_{t,x}(X_{T}^{u}).
\]
Given the amount of money invested in stock 1 $u(t,x)$, the dynamic
of wealth is the same with the one in model 2, i.e., equation (\ref{eq:Wealth_Process_Model_2_3}).
The corresponding extended HJB equation is given by
\begin{equation}
\begin{aligned}V_{t}+\sup_{u\in\mathcal{U}}\bigg(\left\{ x\alpha_{2}+u\left(\alpha_{1}-\alpha_{2}\right)\right\} V_{x}+\frac{1}{2}\bigg[\left\{ x\sigma_{21}+u\left(\sigma_{11}-\sigma_{21}\right)\right\} ^{2}\\
+\left\{ x\sigma_{22}+u\left(\sigma_{12}-\sigma_{22}\right)\right\} ^{2}\bigg]\left(V_{xx}-\gamma g_{x}^{2}\right)\bigg) & =0,\\
g_{t}+\left\{ x\alpha_{2}+u\left(\alpha_{1}-\alpha_{2}\right)\right\} g_{x}+\frac{1}{2}\bigg[\left\{ x\sigma_{21}+u\left(\sigma_{11}-\sigma_{21}\right)\right\} ^{2}\\
+\left\{ x\sigma_{22}+u\left(\sigma_{12}-\sigma_{22}\right)\right\} ^{2}\bigg]g_{xx} & =0,\\
V(T,x) & =x,\\
g\left(T,x\right) & =x.
\end{aligned}
\label{eq:HJB_Model_3}
\end{equation}
The only difference between \eqref{eq:HJB_Model_2} and \eqref{eq:HJB_Model_3}
is that $V(T,x)$ takes different value. Suppose the optimal allocation
for this problem is in the form of \footnote{The procedure for solving this HJB system heavily depends on the \textit{Ansatz}
of $\hat{u}(t,x)$. We tried this general linear form and as we will
see later, neither $k_{1}$ nor $k_{2}$ equals to zero in this model.}
\begin{equation}
\hat{u}(t,x)=k_{1}(t)x+k_{2}(t).\label{eq:optimal_allocation_Model_3}
\end{equation}

\begin{thm}
Assume that $(\sigma_{11}-\sigma_{21})^{2}+(\sigma_{12}-\sigma_{22})\ne0$.
The optimal allocation is $\hat{u}(t,x)=k_{1}(t)x+k_{2}(t)$ where
$k_{1}$ and $k_{2}$ satisfy the ODE system: \label{thm:optimal_allocation_u_model_3}
\begin{align}
k_{1}(t) & =\frac{1}{\left(\sigma_{11}-\sigma_{21}\right)^{2}+\left(\sigma_{12}-\sigma_{22}\right)^{2}}\Bigg\{\left(\alpha_{1}-\alpha_{2}\right)\left(e^{-\intop_{t}^{T}\left[\left\{ \sigma_{21}+k_{1}(s)(\sigma_{11}-\sigma_{21})\right\} ^{2}+\left\{ \sigma_{22}+k_{1}(s)(\sigma_{12}-\sigma_{22})\right\} ^{2}\right]ds}-1\right)\nonumber \\
 & \;\;\;\;-\sigma_{21}\left(\sigma_{11}-\sigma_{21}\right)-\sigma_{22}\left(\sigma_{12}-\sigma_{22}\right)\Bigg\},\label{eq:k1_model_3}\\
k_{2}(t) & =\frac{\alpha_{1}-\alpha_{2}}{\left(\sigma_{11}-\sigma_{21}\right)^{2}+\left(\sigma_{12}-\sigma_{22}\right)^{2}}\bigg\{\frac{1}{\gamma}I_{1}(t,T)I_{2}(t,T)+\int_{t}^{T}I_{1}(t,v)I_{3}(t,v)k_{2}(v)dv\bigg\},\label{eq:k2_model_3}
\end{align}
where
\begin{align*}
I_{1}(t,v) & =e^{-\int_{t}^{v}\{\alpha_{2}+k_{1}(s)(\alpha_{1}-\alpha_{2})\}ds},\\
I_{2}(t,v) & =e^{-\int_{t}^{v}\big[\{\sigma_{21}+k_{1}(s)(\sigma_{11}-\sigma_{21})\}^{2}+\{\sigma_{12}+k_{1}(s)(\sigma_{12}-\sigma_{22})\}^{2}\big]ds},\\
I_{3}(t,v) & =(\alpha_{1}-\alpha_{2})I_{2}(t,T)-[(\alpha_{1}-\alpha_{2})+(\sigma_{11}-\sigma_{21})\{\sigma_{21}+k_{1}(v)(\sigma_{11}-\sigma_{21})\}\\
 & \;\;\;\;+(\sigma_{12}-\sigma_{22})\{\sigma_{22}+k_{1}(v)(\sigma_{12}-\sigma_{22})\}]I_{2}(t,v).
\end{align*}
\end{thm}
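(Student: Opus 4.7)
The plan is to substitute the ansatz \eqref{eq:optimal_allocation_Model_3} $\hat{u}(t,x)=k_1(t)x+k_2(t)$ into the extended HJB system \eqref{eq:HJB_Model_3}, exploit the resulting affine--quadratic structure to represent $g$ and $V$ as low-degree polynomials in $x$, and then read off the characterizations of $k_1$ and $k_2$ from the first-order optimality condition on the supremum.

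First, under $\hat{u}=k_1x+k_2$ both the drift and the diffusion of the controlled wealth are affine in $x$, so I try $g(t,x)=a(t)x+b(t)$. Since $g_{xx}=0$, the second equation of \eqref{eq:HJB_Model_3} reduces to $g_t+\{(\alpha_2+k_1(\alpha_1-\alpha_2))x+k_2(\alpha_1-\alpha_2)\}g_x=0$; matching the coefficients of $x$ and $x^0$ with $a(T)=1$, $b(T)=0$ gives
\[
a(t)=e^{\int_{t}^{T}[\alpha_2+k_1(s)(\alpha_1-\alpha_2)]\,ds}=1/I_1(t,T),\qquad b(t)=\int_{t}^{T}(\alpha_1-\alpha_2)k_2(v)a(v)\,dv.
\]
For the value function I take the quadratic ansatz $V(t,x)=A_V(t)x^2+B_V(t)x+C_V(t)$ with $A_V(T)=0$, $B_V(T)=1$, $C_V(T)=0$. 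Expanding $[x\sigma_{2j}+\hat{u}(\sigma_{1j}-\sigma_{2j})]^2$ in the first equation of \eqref{eq:HJB_Model_3} and collecting the coefficients of $x^2,x,1$ yields three coupled linear ODEs for $A_V,B_V,C_V$ driven by $k_1,k_2,a,b$.

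Next, the first-order condition $\partial_u(\cdot)=0$ inside the supremum gives
\[
\hat{u}=\frac{(\alpha_2-\alpha_1)V_x-x\,\delta\,(V_{xx}-\gamma g_x^2)}{\beta\,(V_{xx}-\gamma g_x^2)},
\]
with $\beta=(\sigma_{11}-\sigma_{21})^2+(\sigma_{12}-\sigma_{22})^2$ and $\delta=\sigma_{21}(\sigma_{11}-\sigma_{21})+\sigma_{22}(\sigma_{12}-\sigma_{22})$. Substituting $V_x=2A_Vx+B_V$, $V_{xx}=2A_V$, $g_x=a$ and matching against $k_1x+k_2$ produces the scalar identities
\[
k_1\beta+\delta=\frac{2A_V(\alpha_2-\alpha_1)}{2A_V-\gamma a^2},\qquad k_2\beta(2A_V-\gamma a^2)=(\alpha_2-\alpha_1)B_V.
\]
To use the first identity I solve the $x^2$-ODE for $p(t)=2A_V(t)$, which is the linear ODE $p'+[2(\alpha_2+k_1(\alpha_1-\alpha_2))+C_1^2+C_2^2]p=\gamma a^2(C_1^2+C_2^2)$ with $p(T)=0$ (where $C_j$ denotes the $s$-dependent diffusion coefficient inside $I_2$); the integrating factor produces an integrand that equals $\frac{d}{ds}(1/I_2(t,s))$, so the integral telescopes to $2A_V(t)=\gamma a(t)^2(1-1/I_2(t,T))$, hence $2A_V/(2A_V-\gamma a^2)=1-I_2(t,T)$. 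Plugging into the first scalar identity and isolating $k_1$ gives \eqref{eq:k1_model_3}.

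For $k_2$ the $x$-coefficient equation of the HJB expansion is a linear ODE $B_V'+\tilde A B_V+F=0$ with $B_V(T)=1$, where $\tilde A=\alpha_2+k_1(\alpha_1-\alpha_2)$ and $F$ collects the remaining terms. Integrating with the factor $1/a(t)$ produces $B_V(t)=a(t)+a(t)\int_{t}^{T}F(s)/a(s)\,ds$. After substituting $2A_V=\gamma a^2(1-1/I_2(\cdot,T))$ one finds $F(s)/a(s)=\gamma a(s)k_2(s)I_2(s,T)^{-1}\{(\alpha_1-\alpha_2)(I_2(s,T)-1)-[\delta+k_1(s)\beta]\}$. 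Inserting this $B_V$ into the second scalar identity above and using $a(s)/a(t)=I_1(t,s)$, the multiplicative identity $I_2(t,s)I_2(s,T)=I_2(t,T)$, and $(\sigma_{11}-\sigma_{21})C_1(v)+(\sigma_{12}-\sigma_{22})C_2(v)=\delta+k_1(v)\beta$ collapses the leading term to $\tfrac{1}{\gamma}I_1(t,T)I_2(t,T)$ and the integrand to $I_1(t,v)I_3(t,v)k_2(v)$, yielding \eqref{eq:k2_model_3}. The hard part will be this last bookkeeping step: $F/a$ naturally contains exponentials with the inner time argument $s$, while \eqref{eq:k2_model_3} is based at the outer time $t$, so one must rewrite the inner-based exponentials via the multiplicative identity and track signs carefully to recover the exact kernel $I_3(t,v)$.
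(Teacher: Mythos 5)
Your proposal is correct, and the computations you sketch do close: the $x^{2}$-coefficient ODE for $p=2A_{V}$ indeed telescopes to $2A_{V}(t)=\gamma a(t)^{2}\{1-I_{2}(t,T)^{-1}\}$, so $2A_{V}/(2A_{V}-\gamma a^{2})=1-I_{2}(t,T)$ and the first scalar identity gives \eqref{eq:k1_model_3}; and with your $\beta,\delta$ the kernel identity $I_{2}(t,v)\{(\alpha_{1}-\alpha_{2})(I_{2}(v,T)-1)-\delta-k_{1}(v)\beta\}=I_{3}(t,v)$ together with $a(v)/a(t)=I_{1}(t,v)$ collapses your expression for $B_{V}$ into \eqref{eq:k2_model_3}. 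Your route, however, is genuinely different from the paper's. The paper does not match polynomial coefficients in the HJB PDEs: it invokes the probabilistic interpretations \eqref{eq:Probabilistic_Interpretation_f}--\eqref{eq:Probabilistic_Interpretation_g}, writes the SDEs for $X_{t}$ and $X_{t}^{2}$ under the ansatz control, solves the resulting linear equations for $\mu_{t,x}(T)=\mathrm{E}_{t,x}(X_{T}^{\hat u})$ and $q_{t,x}(T)=\mathrm{E}_{t,x}\{(X_{T}^{\hat u})^{2}\}$, forms $g=\mu_{t,x}(T)$ and $V=\mu_{t,x}(T)-\frac{\gamma}{2}\{q_{t,x}(T)-\mu_{t,x}(T)^{2}\}$ explicitly as a linear--quadratic function of $x$ (equations \eqref{eq:Exp_Model_3} and \eqref{eq:Value_Function_Model_3}), and only then closes the loop with the same first-order condition. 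Your PDE/ODE coefficient-matching is somewhat leaner for the stated theorem, since you never need the $x^{0}$ coefficient of $V$ (the paper's $c_{2}(t)$) nor the full second moment; the paper's moment computation, on the other hand, delivers $\mathrm{E}_{t,x}(X_{T}^{\hat u})$ and $\mathrm{Var}_{t,x}(X_{T}^{\hat u})$ in closed form, which the paper reuses in its comments and numerical section. One sentence worth adding in a full write-up: the supremum in the first HJB equation is attained at the stationary point because $V_{xx}-\gamma g_{x}^{2}=2A_{V}-\gamma a^{2}=-\gamma a^{2}/I_{2}(t,T)<0$, a fact your formula for $A_{V}$ supplies for free (the paper is silent on this as well).
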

\begin{proof}
The proof of Theorem \ref{thm:optimal_allocation_u_model_3} is quite
tedious. We put it in Appendix \ref{sec: Appendix_A}. \end{proof}
\begin{prop}
The ODE system \eqref{eq:k1_model_3} and \eqref{eq:k2_model_3} admits
a unique solution $(k_{1}(t),k_{2}(t))^{\mathrm{T}}$ where $k_{1},k_{2}\in C[0,T]$.\label{thm:ODE_Unique_Solution}\end{prop}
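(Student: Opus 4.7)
The plan is to exploit the triangular structure of the system: equation \eqref{eq:k1_model_3} determines $k_1$ without involving $k_2$, and once $k_1$ is known, equation \eqref{eq:k2_model_3} becomes a linear Volterra integral equation for $k_2$. Existence and uniqueness will follow by applying the Banach fixed-point theorem twice on $C[0,T]$ equipped with a weighted sup norm.

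\textbf{Step 1 (solving for $k_1$).} Define $\Phi:C[0,T]\to C[0,T]$ by setting $\Phi(k)(t)$ equal to the right-hand side of \eqref{eq:k1_model_3} with $k_1$ replaced by $k$. The exponent $-\int_t^T[\{\sigma_{21}+k(s)(\sigma_{11}-\sigma_{21})\}^2+\{\sigma_{22}+k(s)(\sigma_{12}-\sigma_{22})\}^2]\,ds$ is non-positive, so the exponential always lies in $(0,1]$, and therefore $|\Phi(k)(t)|\le M$ for an explicit constant $M$ depending only on the parameters $\alpha_i,\sigma_{ij}$. Hence $\Phi$ maps the closed ball $\bar B_M\subset C[0,T]$ into itself. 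Using $|e^{-a}-e^{-b}|\le|a-b|$ for $a,b\ge 0$ together with the fact that the quadratic integrand is Lipschitz in $k$ (uniformly on $\bar B_M$, since the coefficients are constants), one obtains an estimate of the form
\[
|\Phi(k)(t)-\Phi(\tilde k)(t)|\le C_1\int_t^T|k(s)-\tilde k(s)|\,ds
\]
for $k,\tilde k\in\bar B_M$ and some explicit $C_1>0$. Introducing the weighted norm $\|k\|_\lambda=\sup_{t\in[0,T]}e^{-\lambda(T-t)}|k(t)|$ (equivalent to the usual sup norm on $C[0,T]$), the standard Gronwall-type manipulation gives $\|\Phi(k)-\Phi(\tilde k)\|_\lambda\le(C_1/\lambda)\|k-\tilde k\|_\lambda$. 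Choosing $\lambda>C_1$ makes $\Phi$ a strict contraction on $\bar B_M$, so Banach's theorem delivers a unique fixed point $k_1\in C[0,T]$.

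\textbf{Step 2 (solving for $k_2$).} With $k_1$ now determined and continuous, the functions $I_1(t,v)$, $I_2(t,v)$, $I_3(t,v)$ are continuous on $[0,T]^2$ and hence uniformly bounded. Equation \eqref{eq:k2_model_3} takes the form $k_2=\varphi+\Psi k_2$, where $\varphi(t)=\frac{\alpha_1-\alpha_2}{(\sigma_{11}-\sigma_{21})^2+(\sigma_{12}-\sigma_{22})^2}\cdot\frac{1}{\gamma}I_1(t,T)I_2(t,T)$ is a known continuous function, and $\Psi$ is the bounded linear Volterra operator $(\Psi\phi)(t)=c\int_t^TI_1(t,v)I_3(t,v)\phi(v)\,dv$ with $c$ a constant. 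Applying the same weighted-norm argument gives $\|\Psi\phi-\Psi\tilde\phi\|_\lambda\le(C_2/\lambda)\|\phi-\tilde\phi\|_\lambda$ for a constant $C_2$ depending on the sup bounds of $I_1,I_3$. Choosing $\lambda$ sufficiently large yields a unique fixed point $k_2\in C[0,T]$; alternatively one may simply invoke the standard Neumann series for linear Volterra integral equations of the second kind.

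\textbf{Main obstacle.} Step 2 is essentially routine linear Volterra theory once $k_1$ is in hand, so the real work is in Step 1. The delicate point is the Lipschitz estimate for the composition of the exponential with the quadratic-in-$k$ integrand: one must verify that the Lipschitz constant of $k\mapsto\{\sigma_{21}+k(\sigma_{11}-\sigma_{21})\}^2+\{\sigma_{22}+k(\sigma_{12}-\sigma_{22})\}^2$ is controlled uniformly on the invariant ball $\bar B_M$ rather than only locally, so that a single choice of $\lambda$ works globally on $\bar B_M$. This is where the a priori bound $|\Phi(k)|\le M$, which is independent of $k$ because of the non-positivity of the exponent, plays its essential role and makes the whole fixed-point scheme self-contained.
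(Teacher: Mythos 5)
Your proposal is correct, and it reaches the conclusion by a genuinely different route from the paper, at least for $k_{1}$. The paper also exploits the triangular structure, but for \eqref{eq:k1_model_3} it runs a Picard-type iteration, proves uniform boundedness (from the exponential lying in $[0,1]$, as you do) and uniform boundedness of the derivatives $\dot{k}_{1}^{(n)}$, invokes Arzel\`a--Ascoli to extract a convergent subsequence whose limit solves the equation, and then proves uniqueness separately via a Lipschitz-on-bounded-sets estimate and a Gronwall inequality. You instead set up a single Banach fixed-point argument on the invariant ball $\bar B_{M}$ with a Bielecki weighted norm, which delivers existence and uniqueness in one stroke; one small point worth making explicit is that uniqueness in $\bar B_{M}$ upgrades to uniqueness in all of $C[0,T]$ precisely because $\Phi$ maps every $k\in C[0,T]$ into $\bar B_{M}$, so any solution is automatically in the ball (you gesture at this in your closing paragraph). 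Your route also sidesteps the delicate step in the compactness argument of identifying the subsequential limit of an iteration as a fixed point, and as a bonus the contraction estimate quantifies the convergence of the iterates, which connects naturally to the numerical scheme and rate in Theorem \ref{thm:Convergence_Speed_k1_k2}. For \eqref{eq:k2_model_3} the two proofs are essentially the same: both treat it as a linear Volterra equation of the second kind, the paper by showing some iterate $F^{N}$ of the integral operator is a contraction via the factorial estimates, you by the equivalent weighted-norm (or Neumann series) device; what the paper's version buys is exactly the explicit $\frac{1}{n!}\{|\lambda|M_{3}(T-t)\}^{n}$ bounds reused later, while yours is marginally shorter.
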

\begin{proof}
See Appendix \ref{sec: Appendix_B}.
\end{proof}
We now make some comments and analysis.
\begin{itemize}
\item Different from the equilibrium control law $\hat{u}(t,x)=c(t)x$ in
Björk et al. (2014) where $c(t)$ is a function of $\gamma$, the
corresponding part which associates with the current wealth state,
$k_{1}(t)$, in the equilibrium control law of model 3 does not depend
on $\gamma$.
\item Suppose that $\left(\sigma_{11}-\sigma_{21}\right)^{2}+\left(\sigma_{12}-\sigma_{22}\right)^{2}$
is strictly positive,

\begin{enumerate}
\item If $\alpha_{1}>(<)\alpha_{2}$, the values of $k_{2}(t)$ and $\hat{u}(t,x)$
decrease (increase) when $\gamma$ increases.
\item If $\alpha_{1}=\alpha_{2}$, then $\forall t\in[0,T]$, we have
\begin{align}
k_{1}(t) & \equiv-\frac{1}{\left(\sigma_{11}-\sigma_{21}\right)^{2}+\left(\sigma_{12}-\sigma_{22}\right)^{2}}\Bigg\{\sigma_{21}\left(\sigma_{11}-\sigma_{21}\right)+\sigma_{22}\left(\sigma_{12}-\sigma_{22}\right)\Bigg\},\label{eq:k1_model_3_constant_alpha1_equals_alpha_2}\\
k_{2}(t) & \equiv0,\nonumber
\end{align}
i.e., $k_{1}(t)$ and $k_{2}(t)$ are constants with respect to $t$.
In this case, $\hat{u}(t,x)=k_{1}(t)x$ is directly proportional to
$x$ with a proportionality constant that does not change with time
$t$. If we further assume $\sigma_{12}=\sigma_{22}=0$, the constant
$k_{1}(t)$ is larger than one and thus $\hat{u}(t,x)>x$ when $\sigma_{11}<\sigma_{21}$.
In this case, we long stock 1 and short stock 2. On the other hand,
if $\sigma_{11}>\sigma_{21}$, we have $\hat{u}(t,x)<0$ and this
indicates that we short stock 1 and long stock 2.
\end{enumerate}
\item Furthermore, by differentiating $k_{1}(t)$ in equation \eqref{eq:k1_model_3}
with respect to $t$, we have
\begin{align*}
\frac{d}{dt}k_{1}(t) & =\frac{\alpha_{1}-\alpha_{2}}{\left(\sigma_{11}-\sigma_{21}\right)^{2}+\left(\sigma_{12}-\sigma_{22}\right)^{2}}e^{-\intop_{t}^{T}\left[\left\{ \sigma_{21}+k_{1}(s)(\sigma_{11}-\sigma_{21})\right\} ^{2}+\left\{ \sigma_{22}+k_{1}(s)(\sigma_{12}-\sigma_{22})\right\} ^{2}\right]ds}\\
 & \;\;\;\;\times\big[\left\{ \sigma_{21}+k_{1}(t)(\sigma_{11}-\sigma_{21})\right\} ^{2}+\left\{ \sigma_{22}+k_{1}(t)(\sigma_{12}-\sigma_{22})\right\} ^{2}\big].
\end{align*}
Therefore, if $\alpha_{1}>(<)\alpha_{2}$, then $dk_{1}(t)/dt>(<)0$
and thus the value of $k_{1}(t)$ increases (decreases) as $t$ increases;
\item Notice that for this model, if we let $\sigma_{12}=\sigma_{21}=\sigma_{22}=0$,
then the uncertainty of the price of the first stock is controlled
by only one Brownian motion and the second stock becomes a riskless
asset. In such case, model 3 is identical to the one considered in
Section 6.1 of Björk and Murgoci (2010). The optimal allocation $\hat{u}(t,x)$,
the expected value of the optimal portfolio $\mathrm{E}_{t,x}(X_{T}^{\hat{u}})$
and the equilibrium value function $V(t,x)$ obtained using equation
(\ref{eq:optimal_allocation_Model_3}), (\ref{eq:Exp_Model_3}) and
(\ref{eq:Value_Function_Model_3}) coincide with those in Section
6.1 of Björk and Murgoci (2010).
\end{itemize}

\subsubsection*{\textcolor{black}{One Brownian motion case: }}

\textcolor{black}{For the case where there is only one Brownian motion,
i.e., $\sigma_{12}=\sigma_{22}=0$ and $\sigma_{11}=\sigma_{21}>0$.
Since the first equation of the HJB system \eqref{eq:HJB_Model_3}
is the same with the one in \eqref{eq:HJB_Model_2}. Therefore, from
the first order condition, we have \eqref{eq:u_1BM_Model_2_3}. }
\begin{itemize}
\item[(i)] \textcolor{black}{ If $\alpha_{1}=\alpha_{2}$, we have the same
conclusion with the one made in model 2. }
\item[(ii)] \textcolor{black}{ If $\alpha_{1}\ne\alpha_{2}$, since $V_{x}\neq0$
with the similar deduction in model 2, therefore, no optimal solution
exists for \eqref{eq:HJB_Model_3}. }
\end{itemize}

\subsubsection*{\textcolor{black}{Two Brownian motions case: }}

\textcolor{black}{Suppose there are two independent Brownian motions,
i.e., $\sigma_{12}=\sigma_{21}=0$. In addition, we assume $\sigma_{11}=\sigma_{22}>0$. }
\begin{itemize}
\item[(i)] \textcolor{black}{ If $\alpha_{1}=\alpha_{2}=\alpha$, then
\[
\hat{u}(t,x)=\frac{1}{2}x,
\]
i.e., the amount of money invested on each stock is the same at any
time $t$ which coincides with the same case in model 2. With this
optimal allocation, the expectation and variance of the wealth $X_{T}$
are (\ref{eq:Exp_Scenario2_model_2}) and (\ref{eq:Var_Scenario2_model_2}).
The reward function can be written accordingly. }
\item[(ii)] \textcolor{black}{ If $\alpha_{1}\neq\alpha_{2}$, then no conclusion
can be made yet since the explicit solution of $\hat{u}(t,x)=k_{1}(t)x+k_{2}(t)$
cannot be obtained. Numerical analysis is required for studying the
behavior of $k_{1}(t,x)$ and $k_{2}(t,x)$. As a demonstration here,
we calculate $k_{1}(t)$ and $k_{2}(t)$ for several combinations
of $T$ and $\gamma$. The parameters are given by: $\alpha_{1}=0.2$,
$\alpha_{2}=0.12$, $\sigma_{11}=\sigma_{22}=0.25$, $\sigma_{12}=\sigma_{21}=0$.
The results are shown in Figure \ref{fig: k_1_k_2_model_3_2BM_demonstration}.
From }\textbf{\textcolor{black}{Figure \ref{fig: k_1_k_2_model_3_2BM_demonstration}}}\textcolor{black}{,
we have the following observations: }
\begin{figure}[h]
\begin{centering}
\textcolor{black}{\includegraphics[scale=0.4]{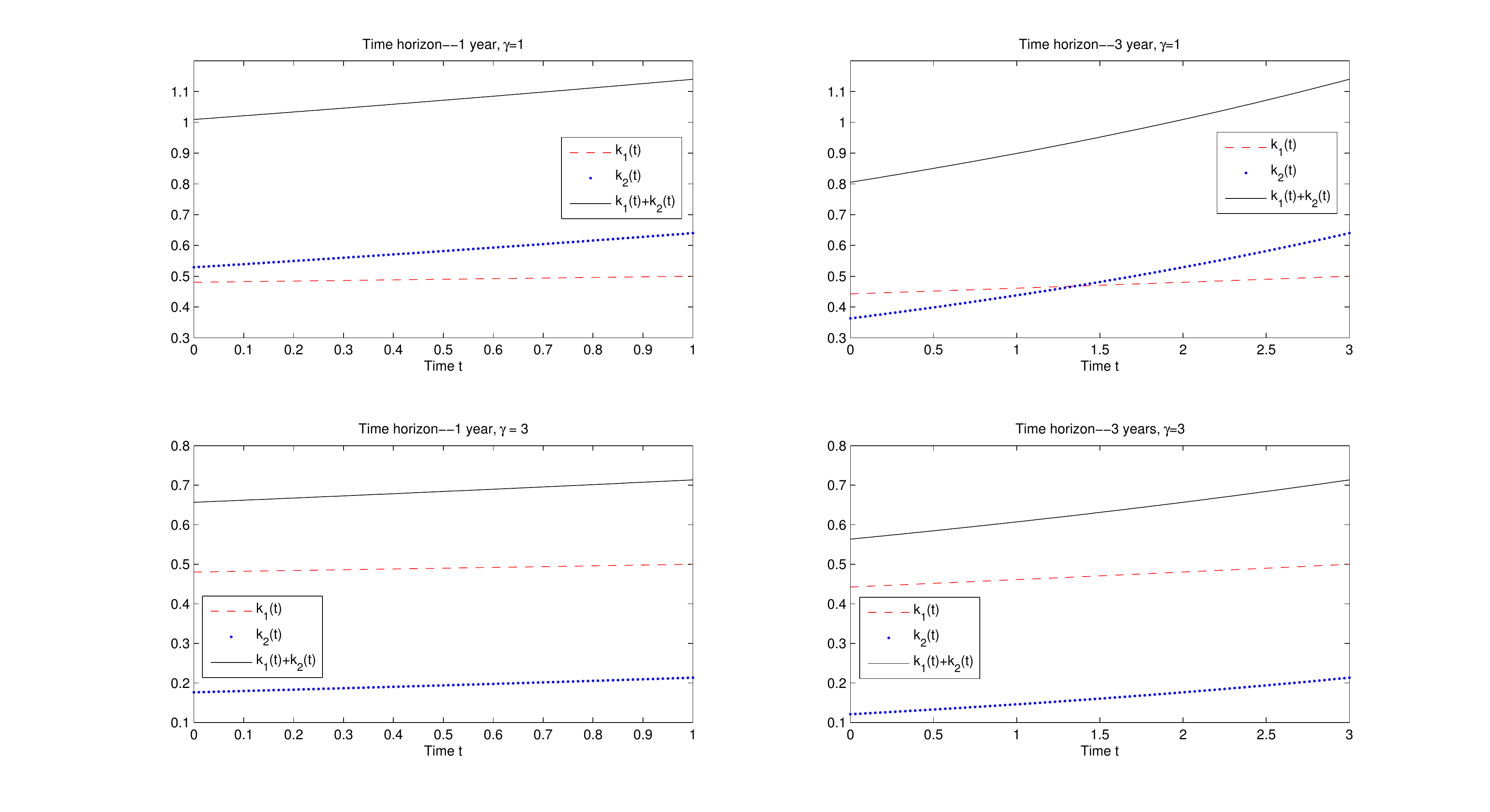}}
\par\end{centering}

\textcolor{black}{\caption{The functions $k_{1}(t)$ and $k_{2}(t)$ for various choices of $T$
and $\gamma$.\label{fig: k_1_k_2_model_3_2BM_demonstration}}
}
\end{figure}

\begin{itemize}
\item \textcolor{black}{For $k_{1}(t)$, we have the same conclusion as
the one obtained in model 2, i.e., $k_{1}(t)<0.5$ if $\alpha_{1}>\alpha_{2}$.
However, since the first stock has a higher appreciation rate, when
taking the expected wealth at time $T$ into consideration in our
objective function, an additional positive amount of money $k_{2}(t)$
($k_{2}(t)\equiv0$ in model 2) is required to invest on stock 1 at
any time $t\in[0,T]$. This can be viewed as a tradeoff between maximizing
the expected wealth and minimizing the corresponding variance.}
\item \textcolor{black}{As $\gamma$ increases from 1 to 3, the value of
$k_{2}(t)$ decreases drastically. This is expected as $\gamma$ increases,
the solution of model 3 will converge to that of model 2. However,
since $k_{1}(t)$ is not a function of $\gamma$, as illustrated in
\eqref{eq:k1_model_3}, $k_{1}(t)$ does not change with $\gamma$. }
\item \textcolor{black}{Now consider the case where $T$ increase while
$\gamma$ is fixed. For $\gamma=1$, $k_{1}(T)=0.5$ and $k_{2}(T)=0.64$
for both $T=1$ and $T=3$. For $\gamma=3$, $k_{1}(T)=0.5$ and $k_{2}(T)=0.2133$
for both $T=1$ and $T=3$. As can be seen in Figure \ref{fig: k_1_k_2_model_3_2BM_demonstration},
both $k_{1}(t)$ and $k_{2}(t)$ decreases as $T$ increases from
1 to 3. When approaching the expiry date, since the investment time
horizon is becoming shorter, an investor will put more and more money
on the stock with higher appreciation rate in order to have a higher
expected wealth on the expiry date. }
\item \textcolor{black}{Suppose that for all $t\in[0,T]$, $X_{t}\equiv1$,
i.e., the initial wealth at any time $t$ is 1. The amount of money
invested in stock 1 at time $t$ is thus $\hat{u}_{1}(t,x)=k_{1}(t)+k_{2}(t)$.
It is observed that $\hat{u}_{1}(t,1)$ is always larger than 0.5.
Therefore, we invest more than a half of our wealth into stock 1.
Furthermore, the total amount of money invested in stock 1 is larger
than 1 for all $t\in[0,T]$ when $T=1$ and $\gamma=1$. For $T=3$
and $\gamma=1$, this also happens when approaching the expiry date.
In this case, one holds a long position of stock 1 and a short position
of stock 2. However, as the investor becomes more and more risk averse,
the amount of money invested in stock 1 reduces and is less than 1
when $\gamma=3$.  In this circumstance, an investor holds long positions
of both stock 1 and stock 2. }
\end{itemize}
\end{itemize}

\subsection{A numerical algorithm for $k_{1}$ and $k_{2}$}

The algorithm is an analog of Björk et al. (2014) in which a 1-dim
ODE is dealt with.
\begin{thm}
\label{thm:Convergence_Speed_k1_k2}Suppose the sequence $\{k_{1}^{(n)}\}$
is constructed by
\begin{align}
k_{1}^{(0)}(t) & =1,\label{eq:Iteration_Scheme_k10}\\
k_{1}^{(n)}(t) & =\frac{1}{\left(\sigma_{11}-\sigma_{21}\right)^{2}+\left(\sigma_{12}-\sigma_{22}\right)^{2}}\Bigg\{-\sigma_{21}\left(\sigma_{11}-\sigma_{21}\right)-\sigma_{22}\left(\sigma_{12}-\sigma_{22}\right)\label{eq:Iteration_Scheme_k1n}\\
 & \;\;\;\;+\left(\alpha_{1}-\alpha_{2}\right)\left(e^{-\intop_{t}^{T}\left[\left\{ \sigma_{21}+k_{1}^{(n-1)}(s)(\sigma_{11}-\sigma_{21})\right\} ^{2}+\left\{ \sigma_{22}+k_{1}^{(n-1)}(s)(\sigma_{12}-\sigma_{22})\right\} ^{2}\right]ds}-1\right)\Bigg\},\nonumber
\end{align}
 for $n=1,\ldots$. Then we have
\[
\big|k_{1}^{(n)}(t)-k_{1}(t)\big|\le\sum_{i=n}^{\infty}\frac{1}{i!}K^{i+1}(T-t)^{i},\quad n=1,2,\ldots.
\]
With the known $k_{1}(t)$, we can construct another sequence $\{k_{2}^{(n)}\}$:
\begin{align}
k_{2}^{(0)}(t) & =1,\label{eq:Iteration_Scheme_k20}\\
k_{2}^{(n)}(t) & =\frac{\alpha_{1}-\alpha_{2}}{\left(\sigma_{11}-\sigma_{21}\right)^{2}+\left(\sigma_{12}-\sigma_{22}\right)^{2}}\bigg\{\frac{1}{\gamma}I_{1}(t,T)I_{2}(t,T)+\int_{t}^{T}I_{1}(t,v)I_{3}(t,v)k_{2}^{(n-1)}(v)dv\bigg\}.\label{eq:Iteration_Scheme_k2n}
\end{align}
For this sequence, we have
\[
\big|k_{2}^{(n)}(t)-k_{2}(t)\big|\le\sum_{j=n}^{\infty}\frac{1}{j!}K^{j+1}(T-t)^{j},\quad n=1,2,\ldots.
\]
\end{thm}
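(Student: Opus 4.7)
The plan is to recognise both recursions as Picard iterations for the fixed-point equations \eqref{eq:k1_model_3} and \eqref{eq:k2_model_3}, to establish the termwise estimate
\[
\bigl|k^{(n+1)}(t) - k^{(n)}(t)\bigr| \le \frac{K^{n+1}(T-t)^n}{n!},\qquad n \ge 0,
\]
by induction on $n$ separately for $k=k_1$ and $k=k_2$, and then to telescope. Since this termwise bound makes the Picard sequences uniformly Cauchy on $[0,T]$ and Proposition \ref{thm:ODE_Unique_Solution} identifies the limit with the unique solution $k$, the triangle inequality $|k^{(n)}(t) - k(t)| \le \sum_{i\ge n}|k^{(i+1)}(t)-k^{(i)}(t)|$ immediately produces the claimed series bound.

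For $k_1$, let $\Phi_1$ denote the operator on $C[0,T]$ defined by the right-hand side of \eqref{eq:Iteration_Scheme_k1n}. The inner quadratic
\[
q(k) := \{\sigma_{21} + k(\sigma_{11}-\sigma_{21})\}^2 + \{\sigma_{22} + k(\sigma_{12}-\sigma_{22})\}^2
\]
is non-negative, so the exponential factor $e^{-\int_t^T q(k(s))\,ds}$ always lies in $(0,1]$; hence $\|\Phi_1(k)\|_\infty$ is uniformly bounded over all $k\in C[0,T]$ by some $M$ depending only on the model parameters, and every iterate $k_1^{(n)}$ remains in $\{\|k\|_\infty\le M\}$. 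On that ball $q$ is Lipschitz with a constant $L_M$. Combining $|e^{-a}-e^{-b}|\le|a-b|$ (valid for $a,b\ge 0$) with this Lipschitz property yields the key contractive-type estimate
\[
\bigl|\Phi_1(u)(t) - \Phi_1(v)(t)\bigr| \le K \int_t^T |u(s) - v(s)|\,ds,
\]
for a constant $K$, which I enlarge if necessary so that also $|k_1^{(1)}(t)-k_1^{(0)}(t)|\le K$ on $[0,T]$, providing the base case $n=0$. The induction step then runs
\[
\bigl|k_1^{(n+1)}(t) - k_1^{(n)}(t)\bigr| \le K \int_t^T \frac{K^n (T-s)^{n-1}}{(n-1)!}\,ds = \frac{K^{n+1}(T-t)^n}{n!},
\]
closing the first half.

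For $k_2$, with $k_1$ now regarded as known, the operator $\Phi_2$ from \eqref{eq:Iteration_Scheme_k2n} is affine: $\Phi_2(k)(t) = D_0(t) + \int_t^T H(t,v)k(v)\,dv$, with continuous, uniformly bounded $D_0$ and kernel $H$ assembled from the $I_j$'s; hence $|\Phi_2(u)(t) - \Phi_2(v)(t)| \le K\int_t^T|u(v)-v(v)|\,dv$ holds trivially once $K$ is taken large enough to dominate $\|H\|_\infty$ and the initial gap $|k_2^{(1)}-k_2^{(0)}|$, and the identical induction delivers the corresponding termwise bound. The main obstacle sits squarely in the $k_1$ analysis: one must (i) certify the uniform bound $M$ on the iterates of $\Phi_1$ so that the quadratic $q$ has a single Lipschitz constant shared by every iterate, and (ii) extract a concrete Lipschitz constant from the composition of $q$ with the decreasing exponential. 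The $k_2$ step is routine once $k_1$ is fixed, so after the operator bounds are in hand the rest of the proof is a standard Picard--Gronwall telescoping applied in parallel to both recursions.
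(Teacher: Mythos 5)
Your proposal is correct and takes essentially the same route as the paper: successive-approximation differences controlled by a Volterra-type kernel estimate, yielding the factorial bound $K^{n+1}(T-t)^{n}/n!$ termwise, followed by telescoping and identification of the limit with the unique solution from Proposition \ref{thm:ODE_Unique_Solution}. The only (cosmetic) difference is that the paper omits the $k_{1}$ half by citing Theorem 4.9 of Bj\"{o}rk et al.\ (2014) and runs the $k_{2}$ half through the integrated differences $\omega_{n}(t)=\int_{t}^{T}\big|\bar{k}_{2}^{(n)}(s)\big|ds$, whereas you prove the needed contraction-type estimate for the $k_{1}$ operator directly (uniform boundedness of the iterates, Lipschitz continuity of the quadratic on that ball, and $|e^{-a}-e^{-b}|\le|a-b|$ for $a,b\ge0$), which is a harmless and more self-contained variant of the same argument.
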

\begin{proof}
See Appendix \ref{sec: Appendix_C}
\end{proof}

\section{Numerical results\label{sec:Numerical_Results}}

\subsection{The solution of model 3}

The parameters chosen are $\alpha_{1}=0.2$, $\sigma_{11}=0.3$, $\sigma_{12}=0$,
$\alpha_{2}=0.12$, $\sigma_{21}=0$, $\sigma_{22}=0.2$. Figure \ref{fig: k1_k2_Model3_various_gamma}
and Figure \ref{fig: k1_k2_Model3_various_T} are the plots of $k_{1}(t)$
and $k_{2}(t)$ with various $\gamma=1,3,5,10$ and $T=0.5,1,5,10$.
The dynamics of stock 1 and stock 2 are
\[
\begin{cases}
dS_{1t}=S_{1t}(0.20dt+0.3dW_{1t}),\\
dS_{2t}=S_{2t}(0.12dt+0.2dW_{2t}).
\end{cases}
\]

From \textbf{Figure \ref{fig: k1_k2_Model3_various_gamma}}, we can
see that $k_{2}(t)$ decreases as the risk aversion coefficient $\gamma$
increases. This is also reflected in the structures of the optimal
allocations and reward functions for model 2 and model 3. Frankly
speaking, model 3 converges to model 2 as $\gamma$ converges to infinity.
\begin{figure}[h]

\begin{centering}
\includegraphics[scale=0.4]{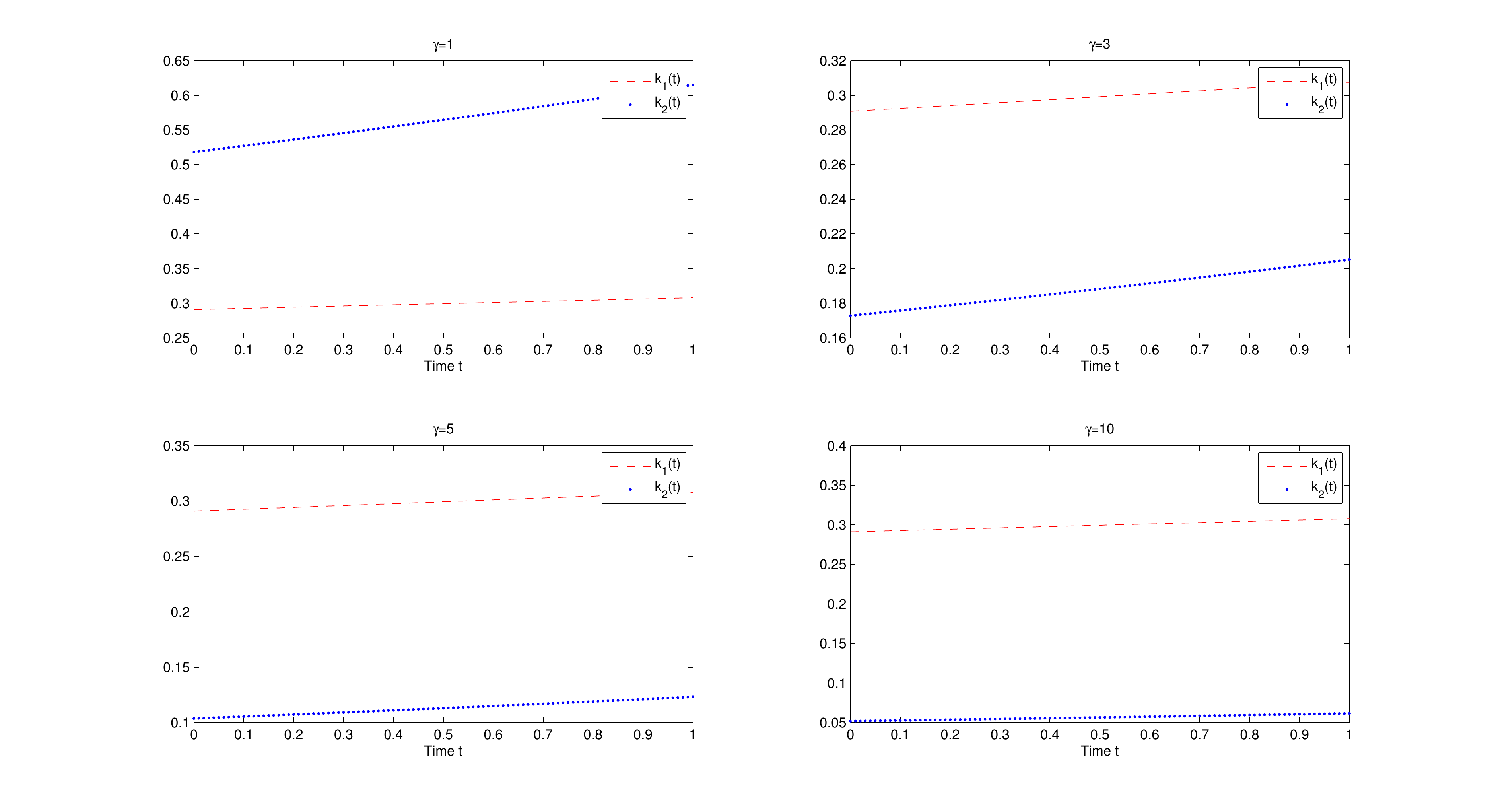}
\par\end{centering}

\caption{The functions $k_{1}(t)$ and $k_{2}(t)$ for various choices of $\gamma$
with $T=1$.\label{fig: k1_k2_Model3_various_gamma}}
\end{figure}

On the other hand, \textbf{Figure \ref{fig: k1_k2_Model3_various_T}}
shows that both $k_{1}(t)$ and $k_{2}(t)$ decrease as the terminal
time $T$ increases. However, for different time horizons, both $k_{1}(t)$
and $k_{2}(t)$ coincide at the date of maturity.
\begin{figure}[h]
\begin{centering}
\includegraphics[scale=0.4]{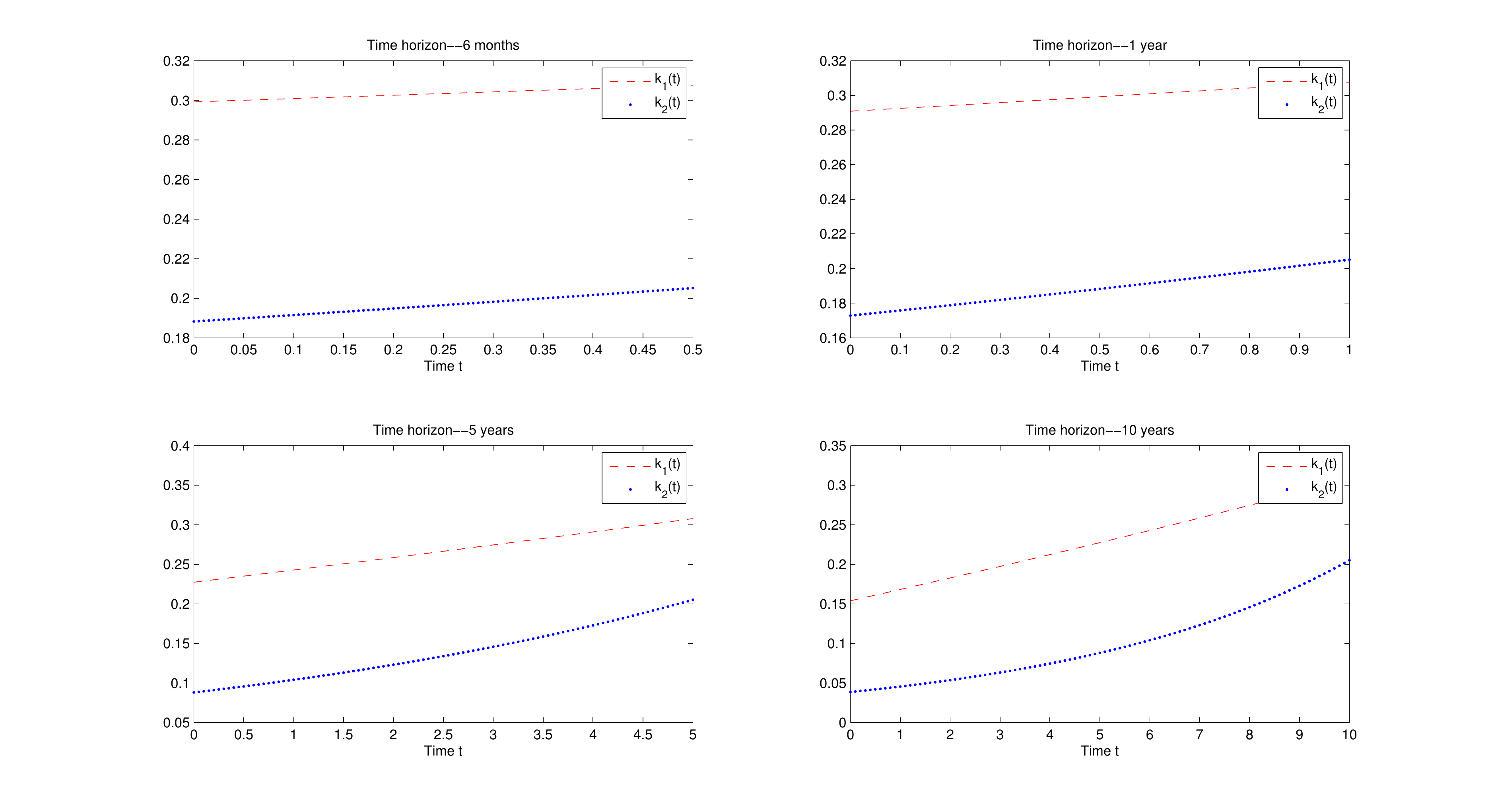}
\par\end{centering}

\caption{The functions $k_{1}(t)$ and $k_{2}(t)$ for various choices of $T$
with $\gamma=3$.\label{fig: k1_k2_Model3_various_T}}
\end{figure}

\textbf{Figure \ref{fig:u1_u2_Model3_3d_plot}} shows the effect of
$t$ and $x$ on $\hat{u}_{1}(t,x)$ and $\hat{u}_{2}(t,x)=x-\hat{u}_{1}(t,x)$,
which are the amounts of money invested in stock 1 and stock 2, respectively.
The terminal time $T=10$ and the risk aversion coefficient $\gamma=3$.
From Figure \ref{sec:Problem_Formulation}, we can see that $0<k_{1}(t)<1$
for all $t\in[0,10]$ and thus both $\hat{u}_{1}(t,x)$ and $\hat{u}_{2}(t,x)$
increase as the value of $x$ increase. Moreover, since both $k_{1}(t)$
and $k_{2}(t)$ are increasing with time $t$, therefore for each
fixed wealth state $x$, $\hat{u}_{1}(t,x)$ is an increasing function
and $\hat{u}_{2}(t,x)$ is a decreasing function of $t$.
\begin{figure}[h]
\begin{centering}
\includegraphics[scale=0.4]{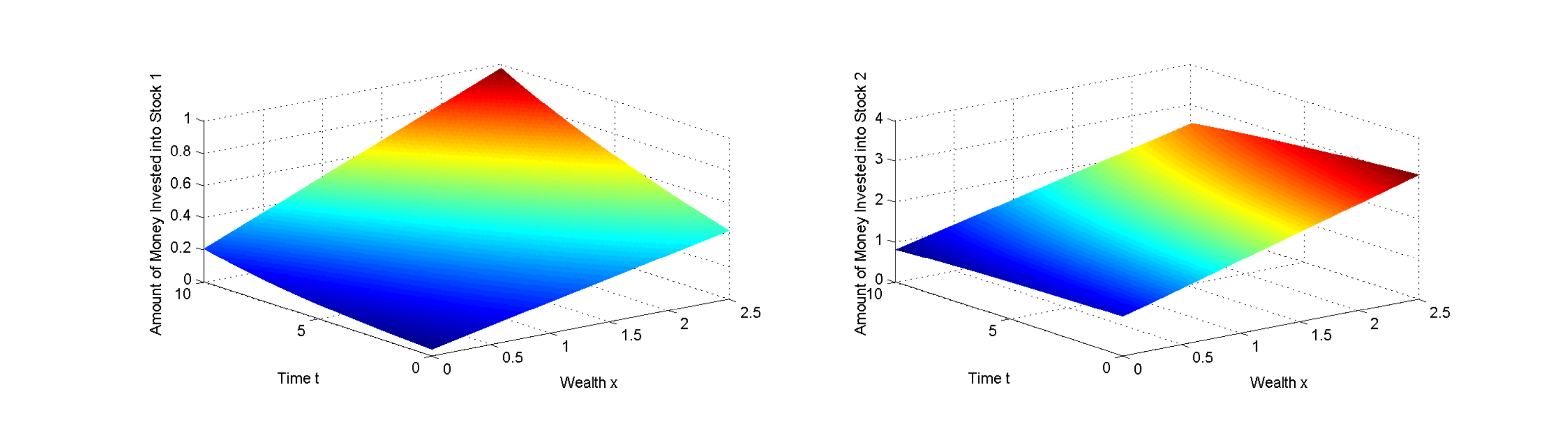}
\par\end{centering}

\caption{The effect of parameters on the amount of money invested in stock
1 and stock 2 in model 3.\label{fig:u1_u2_Model3_3d_plot}}
\end{figure}

\subsection{Comparison of three models}

Here, the parameters are $\gamma=3$, $T=1$, $\alpha_{1}=0.2$, $\sigma_{11}=0.3$,
$\sigma_{12}=0$, $\alpha_{2}=0.12$, $\sigma_{21}=0$, $\sigma_{22}=0.2$
and $r=0.04$. In order to compare the investment strategies of the
three models at different wealth level, we fix $t=0$ and plot Figure
\ref{fig:Amount_Proportion_Various_x_3Models_Comparison} and Figure
\ref{fig: Mean_Variance_various_x_3Models_Comparison}. From \textbf{Figure
\ref{fig:Amount_Proportion_Various_x_3Models_Comparison}}, we can
see that in model 2, since the objective in this model is to minimize
the variance of the expected wealth at the terminal time $T$, the
amount of money invested in stock 2 is larger than that invested in
stock 1 for all wealth level since stock 2 has the variance which
is less than that of stock 1. In model 3, the amount of money invested
in stock 1 is larger than that in model 2 while the one invested in
stock 2 is less than that in model 3. This is because besides minimizing
the variance, maximizing the expected wealth is also our objective
in model 3. Therefore, one will invest more money into the stock with
larger price of market risk.\footnote{One can calculate that the price of market price for stock 1 and stock
2 are $(\alpha_{1}-r)/\sigma_{11}=0.5333$ and $(\alpha_{2}-r)/\sigma_{22}=0.4000$.} In addition, under model 3, we can see that when with low initial
wealth, one even holds short position of stock 2. Furthermore, from\textbf{
Figure \ref{fig: Mean_Variance_various_x_3Models_Comparison}}, the
expected wealth and the conditional variance of wealth of model 3
are both larger than the ones of model 2.
\begin{figure}[h]
\begin{centering}
\includegraphics[clip,scale=0.4]{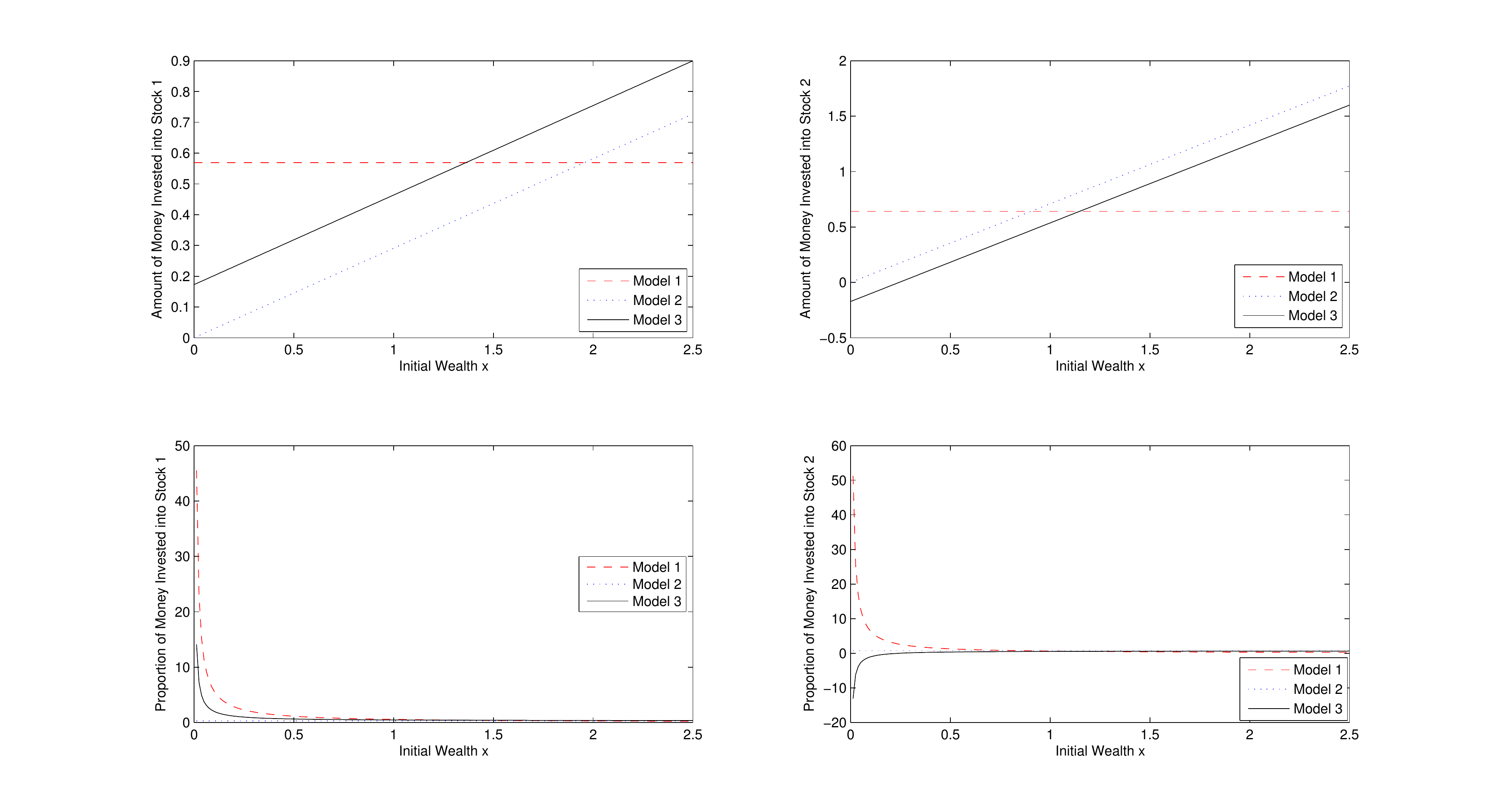}
\par\end{centering}

\caption{The amounts and proportions of money invested in stock 1 and 2 in
the three models with $t=0$.\label{fig:Amount_Proportion_Various_x_3Models_Comparison}}
\end{figure}
\begin{figure}[h]
\begin{centering}
\includegraphics[scale=0.4]{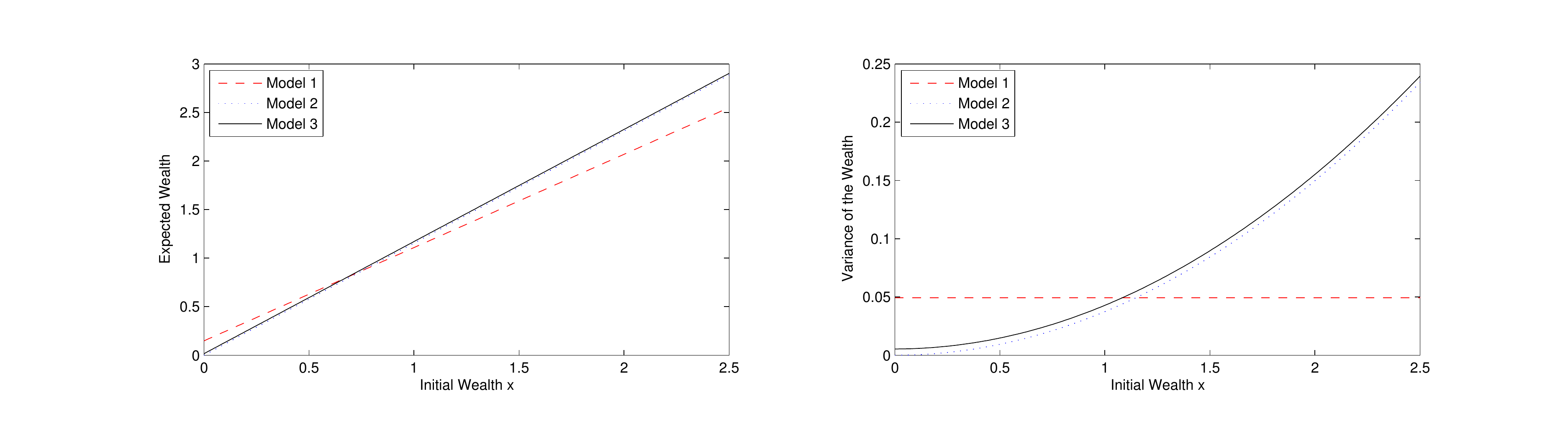}
\par\end{centering}

\centering{}\caption{The expected wealths and the conditional variances of the wealth in
the three models with $t=0$.\label{fig: Mean_Variance_various_x_3Models_Comparison}}
\end{figure}

For demonstration, we simulate two paths of the price, one for stock
1 and the other for stock 2. We then calculate the amounts and proportions
of money invested into these two stocks, the wealth processes, the
expected wealths and the variances for the three models based on these
two paths. The amounts and proportions of money invested in stock
1 and 2 can be found in \textbf{Figure \ref{fig:Simulated_Amount_of_Money_3Models}}.
\begin{figure}[h]
\begin{centering}
\includegraphics[scale=0.4]{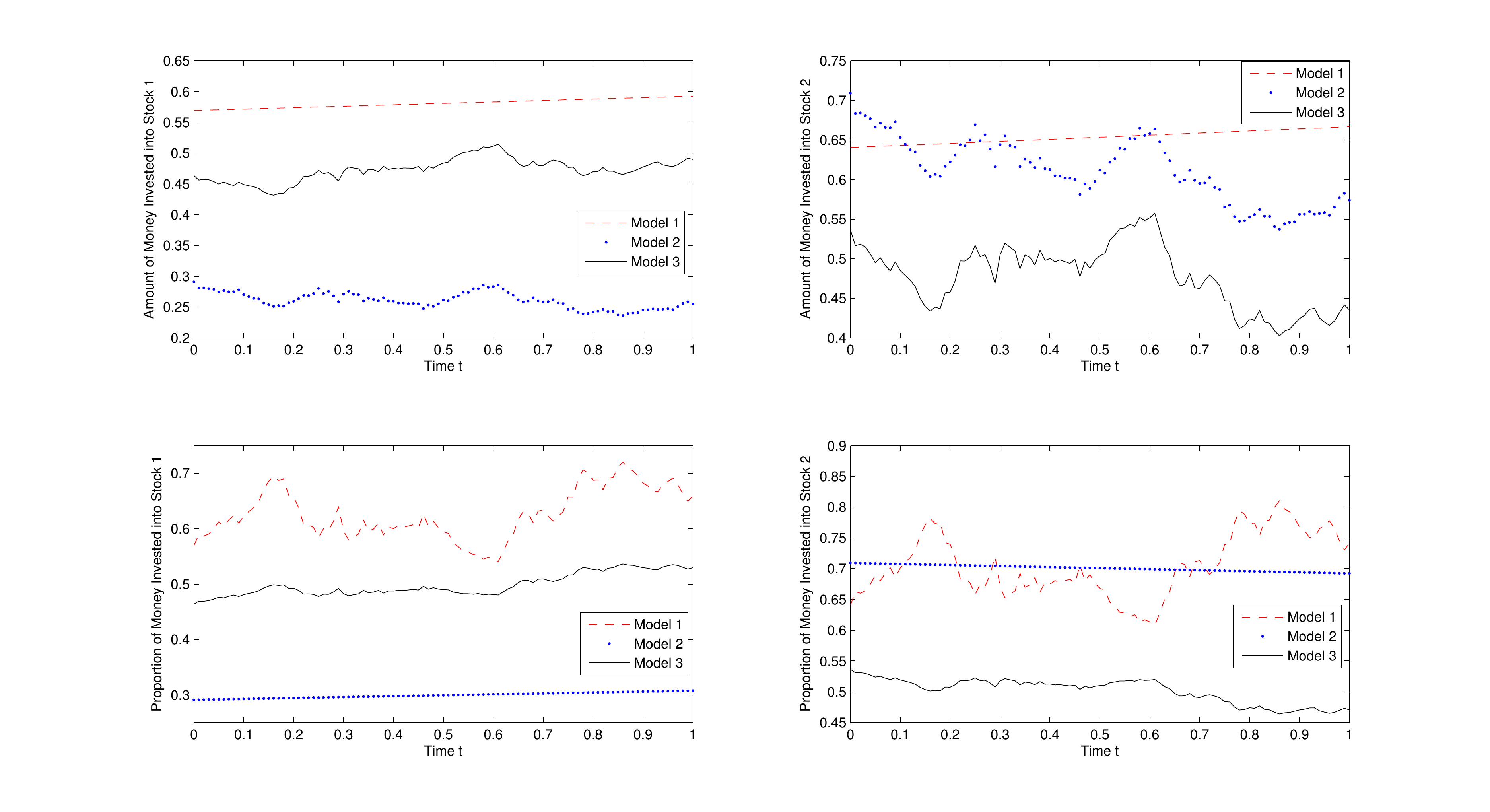}
\par\end{centering}

\caption{Simulated paths of the amounts and proportions of money invested into
stock 1 and 2 for the three models. \label{fig:Simulated_Amount_of_Money_3Models}}
\end{figure}

Observing that stock 1 has a larger volatility than that of stock
2, it is not surprising that the amount of money invested in stock
1 is the least for model 2 because it puts all efforts to minimize
the variance. And obviously in model 1, the investor borrows money
to make risky investment since the sum of money invested into stocks
exceeds 1.

\textbf{Figure \ref{fig:Simulated_Mean_Variance_3Models}} is the
simulated paths of the wealth processes, means, variances and objective
functions for the three models. Model 3 has the largest expected wealth
with middle variance. Model 2 has the lowest variance with also lowest
expected wealth. Overall, model 3 performs out of the three models
with the largest reward in a bull market.\footnote{Here we have assumed both $\alpha_{1}$ and $\alpha_{2}$ are greater
than the riskless interest rate $r$.}
\begin{figure}[h]
\begin{centering}
\includegraphics[scale=0.4]{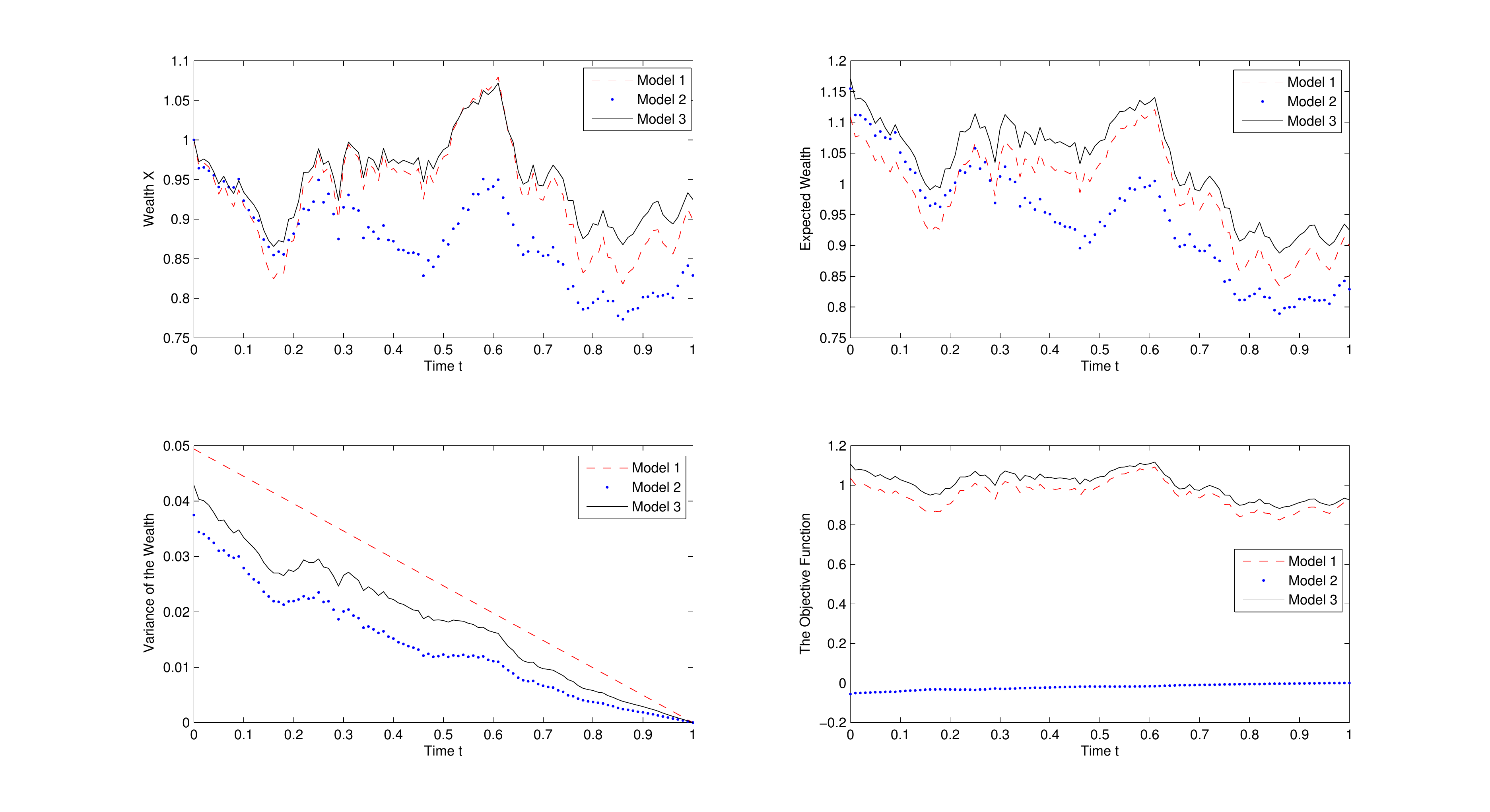}
\par\end{centering}

\caption{Simulated paths of the expected wealths, the conditional variances
of wealths and the reward functions for the three models. \label{fig:Simulated_Mean_Variance_3Models}}
\end{figure}

\section{Conclusion\label{sec:Conclusion}}

In this paper, we construct the optimal time-consistent portfolio
selection strategy for correlated risky assets explicitly without
risk-free asset under the game theoretic framework. The key idea is
an application of the extended HJB system developed in Björk and Murgoci
(2010). The equilibrium control is linear in wealth. If a risk-free
asset is involved, the equilibrium control has zero slope (i.e., it
is independent of the current wealth) and is consistent with the ones
in Basak and Chabakauri (2010) and Björk and Murgoci (2010). Therefore,
model 3 in Section \ref{sec:3_Models_Description} can be considered
as an extension of their models. On the other hand, as risk aversion
approaches infinity, the intercept term of the optimal control tends
to 0. In this sense, the equilibrium control in model 3 converges
to that in model 2. Theorem \ref{thm:ODE_Unique_Solution} provides
the existence and uniqueness of the optimal solution. We also present
an iterative scheme for the determination of the optimal solution
and its convergence speed is given in Theorem \ref{thm:Convergence_Speed_k1_k2}.
We conduct numerical studies for the comparisons of the amounts and
proportions of money invested in the assets, the expected values of
the terminal wealth, the conditional variances and the objectives
functions of the three models. Comparisons demonstrate that model
3 performs better than the previous two in a bull market.

All the three models choose optimal strategies according to the prices
of market risk, i.e., assets with higher prices of market risk are
allocated more and assets with the same level price of market risk
share the same allocation.

Different from Björk et al. (2014), the optimal allocations in model
2 and model 3 depend on the initial wealth although the risk aversion
coefficient $\gamma$ is a constant. The optimal reward functions
are quadratic in initial wealth $x$. Usually a risk-seeking investor
would put all his money into risky assets when he is optimistic in
the market.

\clearpage{}
\begin{appendices}

\section{}\label{sec: Appendix_A}

\setcounter{equation}{0} \renewcommand{\theequation}{A.\arabic{equation}}

\textbf{Proof of Theorem \ref{thm:optimal_allocation_u_model_3}.
}The dynamic of the value process of the portfolio is given by
\begin{align*}
dX_{t} & =[\{\alpha_{2}+k_{1}(t)(\alpha_{1}-\alpha_{2})\}X_{t}+k_{2}(t)(\alpha_{1}-\alpha_{2})]dt\\
 & \;\;\;\;+\big[\{\sigma_{21}+k_{1}(t)(\sigma_{11}-\sigma_{21})\}X_{t}+k_{2}(t)(\sigma_{11}-\sigma_{21})\}\big]dW_{1t}\\
 & \;\;\;\;+\big[\{\sigma_{22}+k_{1}(t)(\sigma_{12}-\sigma_{22})\}X_{t}+k_{2}(t)(\sigma_{12}-\sigma_{22})\}\big]dW_{2t},\\
dX_{t}^{2} & =[2\{\alpha_{2}+k_{1}(t)(\alpha_{1}-\alpha_{2})\}+\{\sigma_{21}+k_{1}(t)(\sigma_{11}-\sigma_{21})\}^{2}+\{\sigma_{22}+k_{1}(t)(\sigma_{12}-\sigma_{22})\}^{2}]X_{t}^{2}dt\\
 & \;\;\;\;+2[(\alpha_{1}-\alpha_{2})+\{\sigma_{21}+k_{1}(t)(\sigma_{11}-\sigma_{21})\}(\sigma_{11}-\sigma_{21})+\{\sigma_{22}+k_{1}(t)(\sigma_{12}-\sigma_{22})\}(\sigma_{12}-\sigma_{22})]k_{2}(t)X_{t}dt\\
 & \;\;\;\;+\{(\sigma_{11}-\sigma_{21})^{2}+(\sigma_{12}-\sigma_{22})^{2}\}k_{2}^{2}(t)dt\\
 & \;\;\;\;+2[\{\sigma_{21}+k_{1}(t)(\sigma_{11}-\sigma_{21})\}X_{t}+k_{2}(t)(\sigma_{11}-\sigma_{21})]X_{t}dW_{1t}\\
 & \;\;\;\;+2\big[\{\sigma_{22}+k_{1}(t)(\sigma_{12}-\sigma_{22})\}X_{t}+k_{2}(t)(\sigma_{12}-\sigma_{22})\big]X_{t}dW_{2t},
\end{align*}
Denote $\mu_{t,x}(T)=\mathrm{E}_{t,x}(X_{T}^{\hat{u}})$ and $q_{t,x}(T)=\mathrm{E}_{t,x}\{(X_{T}^{\hat{u}})^{2}\}$.
By taking expectations on both sides of these two equations, we have

\begin{align}
\mu_{t,x}(T) & =x+\int_{t}^{T}[\{\alpha_{2}+k_{1}(s)(\alpha_{1}-\alpha_{2})\}\mu_{t,x}(s)+k_{2}(s)(\alpha_{1}-\alpha_{2})]ds,\label{eq:Integral_Form_First_Moment}\\
q_{t,x}(T) & =x^{2}+\int_{t}^{T}[2\{\alpha_{2}+k_{1}(s)(\alpha_{1}-\alpha_{2})\}+\{\sigma_{21}+k_{1}(s)(\sigma_{11}-\sigma_{21})\}^{2}\nonumber \\
 & \;\;\;\;\;\;\;\;+\{\sigma_{22}+k_{1}(s)(\sigma_{12}-\sigma_{22})\}^{2}]q_{t,x}(s)ds\nonumber \\
 & \;\;\;\;+\int_{t}^{T}2[(\alpha_{1}-\alpha_{2})+\{\sigma_{21}+k_{1}(s)(\sigma_{11}-\sigma_{21})\}(\sigma_{11}-\sigma_{21})\nonumber \\
 & \;\;\;\;\;\;\;\;+\{\sigma_{22}+k_{1}(s)(\sigma_{12}-\sigma_{22})\}(\sigma_{12}-\sigma_{22})]k_{2}(s)\mu_{t,x}(s)ds\nonumber \\
 & \;\;\;\;+\int_{t}^{T}\{(\sigma_{11}-\sigma_{21})^{2}+(\sigma_{12}-\sigma_{22})^{2}\}k_{2}^{2}(s)ds.\label{eq:Integral_Form_Second_Moment}
\end{align}

Equation \eqref{eq:Integral_Form_First_Moment} is a general linear
differential equation. The solution of this equation is

\[
\mu_{t,x}(T)=\{I_{1}(t,T)\}^{-1}\bigg\{ x+(\alpha_{1}-\alpha_{2})\int_{t}^{T}I_{1}(t,v)k_{2}(t,v)dv\bigg\}.
\]

With this known $\mu_{t,x}(T)$, equation \eqref{eq:Integral_Form_Second_Moment}
is also a general linear differential equation and its solution is
\begin{align*}
q_{t,x}(T) & =x^{2}I_{1}^{-2}(t,T)I_{2}^{-1}(t,T)\\
 & \;\;\;\;+2xI_{1}^{-2}(t,T)I_{2}^{-1}(t,T)\int_{t}^{T}[(\alpha_{1}-\alpha_{2})+\{\sigma_{21}+(\sigma_{11}-\sigma_{21})k_{1}(v)\}(\sigma_{11}-\sigma_{21})\\
 & \;\;\;\;\;\;\;\;+\{\sigma_{22}+(\sigma_{12}-\sigma_{22})k_{1}(v)\}(\sigma_{12}-\sigma_{22})]I_{1}(t,v)I_{2}(t,v)k_{2}(v)dv\\
 & \;\;\;\;+2(\alpha_{1}-\alpha_{2})I_{1}^{-2}(t,T)I_{2}^{-1}(t,T)\int_{t}^{T}I_{1}(t,v)I_{2}(t,v)k_{2}(v)[(\alpha_{1}-\alpha_{2})\\
 & \;\;\;\;\;\;\;\;+\{\sigma_{21}+k_{1}(v)(\sigma_{11}-\sigma_{21})\}(\sigma_{11}-\sigma_{21})\\
 & \;\;\;\;\;\;\;\;+\{\sigma_{22}+(\sigma_{12}-\sigma_{22})k_{1}(v)\}(\sigma_{12}-\sigma_{22})]\int_{t}^{v}k_{2}(w)I_{1}(t,w)dwdv\\
 & \;\;\;\;+I_{1}^{-2}(t,T)I_{2}^{-1}(t,T)\{(\sigma_{11}-\sigma_{21})^{2}+(\sigma_{12}-\sigma_{22})^{2}\}\int_{t}^{T}I_{1}^{2}(t,v)I_{2}(t,v)k_{2}^{2}(v)dv.
\end{align*}

Therefore, the expectation and variance are
\begin{align*}
\mathrm{E}_{t,x}(X_{T}^{\hat{u}}) & =\mu_{t,x}(T)\\
 & =\{I_{1}(t,T)\}^{-1}\bigg\{ x+(\alpha_{1}-\alpha_{2})\int_{t}^{T}I_{1}(t,v)k_{2}(v)dv\bigg\},\\
\mathrm{Var}_{t,x}(X_{T}^{\hat{u}}) & =q_{t,x}(T)-\{\mu_{t,x}(T)\}^{2}\\
 & =\{I_{1}(t,T)\}^{-2}\big\{ c_{0}(t)x^{2}+c_{1}(t)x+c_{2}(t)\big\},
\end{align*}

where
\begin{align*}
c_{0}(t) & =\{I_{2}(t,T)\}^{-1}-1,\\
c_{1}(t) & =-2I_{2}^{-1}(t,T)\int_{t}^{T}I_{1}(t,v)I_{3}(t,v)k_{2}(v)dv,\\
c_{2}(t) & =\{I_{2}(t,T)\}^{-1}\bigg[\{(\sigma_{11}-\sigma_{21})^{2}+(\sigma_{12}-\sigma_{22})^{2}\}\int_{t}^{T}I_{1}^{2}(t,v)I_{2}(t,v)k_{2}^{2}(v)dv\\
 & \;\;\;\;-2(\alpha_{1}-\alpha_{2})\int_{t}^{T}I_{1}(t,v)I_{3}(t,v)k_{2}(v)\bigg\{\int_{t}^{v}I_{1}(t,w)k_{2}(w)dw\bigg\} dv\bigg].
\end{align*}

The functions $g(t,x)$ and $V(t,x)$ are given by
\begin{align}
g(t,x) & =\mathrm{E}_{t,x}(X_{T}^{\hat{u}})\nonumber \\
 & =\{I_{1}(t,T)\}^{-1}x+(\alpha_{1}-\alpha_{2})\{I_{1}(t,T)\}^{-1}\int_{t}^{T}I_{1}(t,v)k_{2}(v)dv,\label{eq:Exp_Model_3}\\
V(t,x) & =\mathrm{E}_{t,x}(X_{T}^{\hat{u}})-\frac{\gamma}{2}\mathrm{Var}_{t,x}(X_{T}^{\hat{u}})\nonumber \\
 & =-\frac{\gamma}{2}\{I_{1}(t,T)\}^{-2}c_{0}(t)x^{2}+\big[\{I_{1}(t,T)\}^{-1}-\frac{\gamma}{2}\{I_{1}(t,T)\}^{-2}c_{1}(t)\big]x\nonumber \\
 & \;\;\;\;+(\alpha_{1}-\alpha_{2})\{I_{1}(t,T)\}^{-1}\int_{t}^{T}I_{1}(t,v)k_{2}(v)dv-\frac{\gamma}{2}\{I_{1}(t,T)\}^{-2}c_{2}(t).\label{eq:Value_Function_Model_3}
\end{align}
Here, $g(t,\cdot)$ is a linear function while $V(t,\cdot)$ is a
linear--quadratic function of the current state. By the first order
condition and the HJB system \eqref{eq:HJB_Model_3}, the optimal
allocation $\hat{u}(t,x)$ has the expression:

\[
\hat{u}(t,x)=\frac{(\alpha_{2}-\alpha_{1})V_{x}-\{\sigma_{21}(\sigma_{11}-\sigma_{21})+\sigma_{22}(\sigma_{12}-\sigma_{22})\}(V_{xx}-\gamma g_{x}^{2})x}{\{(\sigma_{11}-\sigma_{21})^{2}+(\sigma_{12}-\sigma_{22})^{2}\}(V_{xx}-\gamma g_{x}^{2})}.
\]
By substituting \eqref{eq:Exp_Model_3} and \eqref{eq:Value_Function_Model_3}
into this equation and after some tedious algebra, we obtain the ODE
system \eqref{eq:k1_model_3}--\eqref{eq:k2_model_3} for $k_{1}(t)$
and $k_{2}(t)$.

\section{}\label{sec: Appendix_B}

\setcounter{equation}{0} \renewcommand{\theequation}{B.\arabic{equation}}

\textbf{Proof of Position \ref{thm:ODE_Unique_Solution}.} We assume
that $\alpha_{1}>\alpha_{2}$ and the situation where $\alpha_{1}<\alpha_{2}$
can be similarly treated. We first prove that the integral equation
(\ref{eq:k1_model_3}) admits a unique solution $k_{1}\in C[0,T]$.
Construct a sequence

\begin{align}
k_{1}^{(0)}(t) & =1,\nonumber \\
k_{1}^{(n)}(t) & =\frac{1}{\left(\sigma_{11}-\sigma_{21}\right)^{2}+\left(\sigma_{12}-\sigma_{22}\right)^{2}}\Bigg\{-\sigma_{21}\left(\sigma_{11}-\sigma_{21}\right)-\sigma_{22}\left(\sigma_{12}-\sigma_{22}\right)\label{eq:sequence_k1_n}\\
 & \;\;\;\;+\left(\alpha_{1}-\alpha_{2}\right)\bigg(e^{-\intop_{t}^{T}\left[\left\{ \sigma_{21}+k_{1}^{(n-1)}(s)(\sigma_{11}-\sigma_{21})\right\} ^{2}+\left\{ \sigma_{22}+k_{1}^{(n-1)}(s)(\sigma_{12}-\sigma_{22})\right\} ^{2}\right]ds}-1\bigg)\Bigg\},\nonumber
\end{align}
for $n=1,\ldots$.

For all $t\in[0,T]$, from
\[
0\le e^{-\intop_{t}^{T}\left[\left\{ \sigma_{21}+k_{1}^{(n-1)}(s)(\sigma_{11}-\sigma_{21})\right\} ^{2}+\left\{ \sigma_{22}+k_{1}^{(n-1)}(s)(\sigma_{12}-\sigma_{22})\right\} ^{2}\right]ds}\le1,
\]
we have

\begin{align*}
k_{1}^{(n)}(t) & \ge-\frac{1}{\left(\sigma_{11}-\sigma_{21}\right)^{2}+\left(\sigma_{12}-\sigma_{22}\right)^{2}}\Bigg\{\sigma_{21}\left(\sigma_{11}-\sigma_{21}\right)+\sigma_{22}\left(\sigma_{12}-\sigma_{22}\right)+(\alpha_{1}-\alpha_{2})\Bigg\},\\
k_{1}^{(n)}(t) & \le-\frac{1}{\left(\sigma_{11}-\sigma_{21}\right)^{2}+\left(\sigma_{12}-\sigma_{22}\right)^{2}}\Bigg\{\sigma_{21}\left(\sigma_{11}-\sigma_{21}\right)+\sigma_{22}\left(\sigma_{12}-\sigma_{22}\right)\Bigg\}.
\end{align*}
Therefore, the sequence $\{k_{1}^{(n)}\}$ is uniformly bounded in
$C[0,T]$.

We now consider the sequence $\big\{\dot{k}_{1}^{(n)}\big\}$ where
$\dot{k}_{1}^{(n)}=dk_{1}^{(n)}(t)/dt$. The derivative $\dot{k}_{1}^{(n)}$
has the expression:
\begin{align}
\dot{k}_{1}^{(n)}(t) & =\frac{\alpha_{1}-\alpha_{2}}{\left(\sigma_{11}-\sigma_{21}\right)^{2}+\left(\sigma_{12}-\sigma_{22}\right)^{2}}\bigg[\bigg\{\sigma_{21}+k_{1}^{(n-1)}(t)(\sigma_{11}-\sigma_{21})^{2}\bigg\}+\bigg\{\sigma_{22}+\label{eq:diff_k1}\\
 & \;\;\;\;k_{1}^{(n-1)}(t)(\sigma_{12}-\sigma_{22})^{2}\bigg\}\bigg]e^{-\intop_{t}^{T}\left[\left\{ \sigma_{21}+k_{1}^{(n-1)}(s)(\sigma_{11}-\sigma_{21})\right\} ^{2}+\left\{ \sigma_{22}+k_{1}^{(n-1)}(s)(\sigma_{12}-\sigma_{22})\right\} ^{2}\right]ds},\nonumber
\end{align}
Since we have proved that $\{k_{1}^{(n)}\}$ is uniformly bounded
in $C[0,T]$, from (\ref{eq:diff_k1}), we can conclude that the sequence
$\big\{\dot{k}_{1}^{(n)}\big\}$ is also uniformly bounded in $C[0,T]$.
Denote $|k_{1}^{(n)}|<M_{1}$ for all $n$ and all $t\in[0,T]$. Therefore,
for any $t_{1},t_{2}\in[0,T]$ and $t_{1}<t_{2}$, we have
\begin{align*}
\big|k_{1}^{(n)}(t_{1})-k_{1}^{(n)}(t_{2})\big| & =\bigg|\int_{0}^{1}\frac{d}{ds}k_{1}^{(n)}(t_{1}+s(t_{2}-t_{1}))ds\bigg|\\
 & =\bigg|(t_{2}-t_{1})\int_{0}^{1}\dot{k}_{1}^{(n)}(t_{1}+s(t_{2}-t_{1}))ds\bigg|\\
 & \le(t_{2}-t_{1})\max_{0\le t\le T}\big|\dot{k}_{1}^{(n)}(t)\big|=M_{1}(t_{2}-t_{1}).
\end{align*}
Therefore, the sequence $\{k_{1}^{(n)}\}$ is also equicontinuous.
According to Arzela--Ascoli Theorem, there exists a subsequence of
$\{k_{1}^{(n)}\}$, $\{k_{1}^{(n_{i})}\}$, and a $k_{1}\in C[0,T]$
such that $k_{1}^{(n_{i})}\to k_{1}$ as $i\to\infty$. Since $\{k_{1}^{(n_{i})}\}$
satisfies (\ref{eq:sequence_k1_n}), by letting $i\to\infty$, we
can conclude that $k_{1}$ is a solution to (\ref{eq:sequence_k1_n}).

For the uniqueness of the solution, suppose $k_{1}$ and $k_{2}$
are two solutions to equation (\ref{eq:sequence_k1_n}). Since $k_{1}$
and $l_{1}$ are bounded in $[0,T]$, therefore the functions
\[
-\intop_{t}^{T}\left[\left\{ \sigma_{21}+k_{1}(s)(\sigma_{11}-\sigma_{21})\right\} ^{2}+\left\{ \sigma_{22}+k_{1}(s)(\sigma_{12}-\sigma_{22})\right\} ^{2}\right]ds
\]
 and

\[
-\intop_{t}^{T}\left[\left\{ \sigma_{21}+l_{1}(s)(\sigma_{11}-\sigma_{21})\right\} ^{2}+\left\{ \sigma_{22}+l_{1}(s)(\sigma_{12}-\sigma_{22})\right\} ^{2}\right]ds
\]
are also bounded for all $t\in[0,T]$. Since the function $f(x)=e^{x}$
is Lipschitz on bounded set, it is easy to derive that
\begin{align*}
|k_{1}(t)-l_{1}(t)| & \le M_{2}\int_{t}^{T}|k_{1}(s)-l_{1}(s)|ds.
\end{align*}
This Gronwal inequality implies that $k_{1}(t)=l_{1}(t)$ for all
$t\in[0,T]$.

We thus proved that equation (\ref{eq:k1_model_3}) admits a unique
solution $k_{1}\in C[0,T]$. For equation (\ref{eq:k2_model_3}),
denote $\lambda=(\alpha_{1}-\alpha_{2})/\{\left(\sigma_{11}-\sigma_{21}\right)^{2}+\left(\sigma_{12}-\sigma_{22}\right)^{2}\}$,
$\phi(t)=\gamma^{-1}\lambda I_{1}(t,T)I_{2}(t,T)$ and $L(t,v)=I_{1}(t,v)I_{3}(t,v)$.
Then we have
\begin{equation}
k_{2}(t)=\phi(t)+\lambda\int_{t}^{T}L(t,v)k_{2}(v)dv.\label{eq:Volterra_equation}
\end{equation}
Equation (\ref{eq:Volterra_equation}) is a Volterra integral equation
of the second kind. Consider the mapping $F:C[0,T]\to C[0,T]$,
\[
Fk_{2}(t)=\phi(t)+\lambda\int_{t}^{T}L(t,v)k_{2}(v)dv.
\]
Then for all $k_{2},l_{2}\in C[0,T]$,
\begin{align*}
|Fk_{2}(t)-Fl_{2}(t)| & =|\lambda|\bigg|\int_{t}^{T}L(t,v)\{k_{2}(v)-l_{2}(v)\}dv\bigg|\\
 & \le|\lambda|M_{3}(T-t)\max_{t\le s\le T}|k_{2}(s)-l_{2}(s)|
\end{align*}
where $M_{3}=\max_{0\le t_{1},t_{2}\le T}|L(t_{1},t_{2})|$.

Therefore,
\begin{align*}
|F^{2}k_{2}(t)-F^{2}l_{2}(t)| & =|\lambda|\bigg|\int_{t}^{T}L(t,v)\{Fk_{2}(v)-Fl_{2}(v)\}dv\bigg|\\
 & \le|\lambda|M_{3}\int_{t}^{T}|\lambda|M_{3}\max_{t\le s\le T}|k_{2}(s)-l_{2}(s)|(T-v)dv\\
 & =\frac{1}{2}|\lambda|^{2}M_{3}^{2}(T-t)^{2}\max_{t\le s\le T}|k_{2}(s)-l_{2}(s)|.
\end{align*}
By induction, we have
\begin{align*}
|F^{n}k_{2}(t)-F^{n}l_{2}(t)| & \le\frac{1}{n!}|\lambda|^{n}M_{3}^{n}(T-t)^{n}\max_{t\le s\le T}|k_{2}(s)-l_{2}(s)|\\
 & \le\frac{1}{n!}|\lambda|^{n}M_{3}^{n}T^{n}\max_{t\le s\le T}|k_{2}(s)-l_{2}(s)|.
\end{align*}
Since
\[
\lim_{n\to\infty}\frac{\{|\lambda|M_{3}T\}^{n}}{n!}=0,
\]
therefore, for the given fixed values of $\lambda$, $M_{3}$ and
$T$, there exists an integer $N$ such that
\[
0\le\frac{\{|\lambda|M_{3}T\}^{N}}{N!}<1.
\]
So the mapping $F^{N}$ is a contraction and thus equation (\ref{eq:Volterra_equation})
has one and only one solution.

\section{}\label{sec: Appendix_C}

\setcounter{equation}{0} \renewcommand{\theequation}{C.\arabic{equation}}

\textbf{Proof of Theorem \ref{thm:Convergence_Speed_k1_k2}. }The
proof of the convergence speed of the iterative scheme \eqref{eq:Iteration_Scheme_k10}--\eqref{eq:Iteration_Scheme_k1n}
is similar to Theorem 4.9 of Björk et al. (2014). We thus omit it
here.

Using the notations in Appendix \ref{sec: Appendix_B}, the iteration
scheme \eqref{eq:Iteration_Scheme_k20}--\eqref{eq:Iteration_Scheme_k2n}
can be written as
\[
k_{2}^{(n)}(t)=\phi(t)+\lambda\int_{t}^{T}L(t,v)k_{2}^{(n-1)}(v)dv.
\]

Denote $\bar{k}_{2}^{(n)}=k_{2}^{(n)}-k_{2}^{(n-1)},$ for $\forall t\in[0,T]$,
we have
\begin{align}
\big|\bar{k}_{2}^{(n)}(t)\big| & =|\lambda|\big|\int_{t}^{T}L(t,v)\big(k_{2}^{(n-1)}(v)-k_{2}^{(n-2)}(v)\big)dv\big|\nonumber \\
 & \le|\lambda|M_{3}\int_{t}^{T}\big|\bar{k}_{2}^{(n-1)}(v)\big|dv.\label{eq:(4.16)_in_Zhou}
\end{align}
Let $\omega_{n}(t)=\int_{t}^{T}\big|\bar{k}_{2}^{(n)}(s)\big|ds$.
From Equation \eqref{eq:(4.16)_in_Zhou}, we have
\[
\frac{d}{dt}\omega_{n}(t)+|\lambda|M_{3}\omega_{n-1}(t)\ge0,
\]
and thus
\begin{align*}
\omega_{n}(t) & \le|\lambda|M_{3}\int_{t}^{T}\omega_{n-1}(v)dv\\
 & \le(|\lambda|M_{3})^{2}\int_{t}^{T}\int_{v}^{T}\omega_{n-1}(s)dsdv\\
 & =(|\lambda|M_{3})^{2}\int_{t}^{T}\int_{t}^{s}\omega_{n-1}(s)dvds\\
 & =(|\lambda|M_{3})^{2}\int_{t}^{T}(s-t)\omega_{n-1}(s)dv\\
 & \le(|\lambda|M_{3})^{3}\int_{t}^{T}(s-t)\int_{s}^{T}\omega_{n-2}(v)dvds\\
 & \le(|\lambda|M_{3})^{3}\int_{t}^{T}\int_{t}^{v}(s-t)\omega_{n-2}(v)dsdv\\
 & \le(|\lambda|M_{3})^{3}\int_{t}^{T}\frac{1}{2!}(v-t)^{2}\omega_{n-2}(v)dv\\
 & \le\cdots\\
 & \le(|\lambda|M_{3})^{n}\int_{t}^{T}\frac{1}{(n-1)!}(v-t)^{n-1}\omega_{1}(v)dv\\
 & \le\frac{1}{n!}(|\lambda|M_{3})^{n}(T-t)^{n}\omega_{1}(0).
\end{align*}
Therefore,
\begin{align*}
\big|k_{2}^{(n)}(t)-k_{2}(t)\big| & =\big|-\sum_{j=n}^{\infty}\bar{k}_{2}^{(j+1)}\big|\\
 & \le\sum_{j=n}^{\infty}\big|\bar{k}_{2}^{(j+1)}\big|\\
 & \le\sum_{j=n}^{\infty}\frac{1}{j!}\omega_{1}(0)(|\lambda|M_{3})^{j+1}(T-t)^{j}\\
 & \le\sum_{j=n}^{\infty}\frac{1}{j!}K{}^{j+1}(T-t)^{j}
\end{align*}
for $n=1,\ldots.$ Here, $K$ can be selected as any positive constant
larger than $\max(|\lambda|M_{3},\omega_{1}(0)|\lambda|M_{3})$.

\end{appendices}

\section*{References}
\begin{description}
\item [{\textmd{Bajeux-Besnainou,}}] I., and Portait, R. (1998). Dynamic
asset allocation in a mean--variance framework. \textit{Management
Science}, 44(11), 79--95.
\item [{\textmd{Basak,}}] S. and Chabakauri, G. (2010). Dynamic mean--variance
asset allocation. \textit{Review of Financial Studies}, 23, 2970--3016.
\item [{\textmd{Bielecki,}}] T. R., Jin, H., Pliska, S. R., and Zhou, X.
Y. (2005). Continuous time mean variance portfolio selection with
bankruptcy prohibition. \textit{Mathematical Finance}, 15(2), 213--244.
\item [{\textmd{Björk,}}] T., and Murgoci, A. (2010). A general theory
of Markovian time inconsistent stochastic control problems. Working
Paper, Stockholm School of Economics. Available at SSRN 1694759.
\item [{\textmd{Björk,}}] T., and Murgoci, A. (2014). A theory of Markovian
time-inconsistent stochastic control in discrete time. \textit{Finance
and Stochastics}, 18(3), 545--592.
\item [{\textmd{Björk,}}] T., Murgoci, A., and Zhou, X. Y. (2014). Mean--variance
portfolio optimization with state-dependent risk aversion. \textit{Mathematical
Finance}, 24(1), 1--24.
\item [{\textmd{Bensoussan,}}] A., Wong, K. C., Yam, S. C. P., and Yung,
S. P. (2014). Time-consistent portfolio selection under short-selling
prohibition: from discrete to continuous setting. \textit{SIAM Journal
on Financial Mathematics}, 5(1), 153--190.
\item [{\textmd{Chen,}}] P., and Yang, H. (2011). Markowitz's mean--variance
asset-liability management with regime switching: a multi-period model.
\textit{Applied Mathematical Finance}, 18(1), 29--50.
\item [{\textmd{Chen,}}] P., Yang, H., and Yin, G. (2008). Markowitz\textquoteright s
mean--variance asset-liability management with regime switching: a
continuous-time model. \textit{Insurance: Mathematics and Economics},
43(3), 456--465.
\item [{\textmd{Dai,}}] M., Xu, Z. Q., and Zhou, X. Y. (2010). Continuous-time
Markowitz's model with transaction costs. \textit{SIAM Journal on
Financial Mathematics}, 1(1), 96--125.
\item [{\textmd{Ekeland,}}] I., and Lazrak, A. (2006). Being serious about
non-commitment: subgame perfect equilibrium in continuous time. Preprint.
University of British Columbia.
\item [{\textmd{Ekeland,}}] I., and Pirvu, T. A. (2008). Investment and
consumption without commitment. \textit{Mathematics and Financial
Economics}, 2(1), 57--86.
\item [{\textmd{Konno,}}] H., and Yamazaki, H. (1991). Mean-absolute deviation
portfolio optimization model and its applications to Tokyo stock market.
\textit{Management Science}, 37(5), 519--531.
\item [{\textmd{Li,}}] Y., and Li, Z. (2013). Optimal time-consistent investment
and reinsurance strategies for mean--variance insurers with state
dependent risk aversion. \textit{Insurance: Mathematics and Economics},
53(1), 86--97.
\item [{\textmd{Li,}}] D., and Ng, W. L. (2000). Optimal dynamic portfolio
selection: multi-period mean--variance formulation. \textit{Mathematical
Finance}, 10(3), 387--406.
\item [{\textmd{Li,}}] D., Rong, X., and Zhao, H. (2015). Time-consistent
reinsurance-investment strategy for a mean\textendash variance insurer
under stochastic interest rate model and inflation risk. \textit{Insurance:
Mathematics and Economics}, 64, 28--44.
\item [{\textmd{Lim,}}] A. E. (2004). Quadratic hedging and mean--variance
portfolio selection with random parameters in an incomplete market.
\textit{Mathematics of Operations Research}, 29(1), 132--161.
\item [{\textmd{Lim,}}] A. E., and Zhou, X. Y. (2002). Mean-variance portfolio
selection with random parameters in a complete market. \textit{Mathematics
of Operations Research}, 27(1), 101--120.
\item [{\textmd{Lin,}}] X., and Qian, Y. (2015). Time-consistent mean--variance
reinsurance-investment strategy for insurers under CEV model. \textit{Scandinavian
Actuarial Journal}, 1--26.
\item [{\textmd{Markowitz,}}] H. (1952). Portfolio selection. \textit{The
Journal of Finance}, 7(1), 77--91.
\item [{\textmd{Markowitz,}}] H. (1956). The optimization of a quadratic
function subject to linear constraints. \textit{Naval Research Logistics
Quarterly}, 3, 111--133.
\item [{\textmd{Markowitz,}}] H. (1959). \textit{Portfolio Selection: Efficient
Diversification of Investments}. New York: John Wiley and Sons.
\item [{\textmd{Markowitz,}}] H., Todd, P., Xu, G., and Yamane, Y. (1993).
Computation of mean--semivariance efficient sets by the critical line
algorithm. \textit{Annals of Operations Research}, 45(1), 307--317.
\item [{\textmd{Merton,}}] R. C. (1972). An analytical derivation of the
efficient portfolio frontier. \textit{Journal of Financial and Quantitative
Analysis}, 7, 1851--1872.
\item [{\textmd{Ogryczak,}}] W., and Ruszczy\'{n}ski, A. (1999). From stochastic
dominance to mean--risk models: semideviations as risk measures. \textit{European
Journal of Operational Research}, 116(1), 33--50.
\item [{\textmd{Peleg,}}] B., and Yaari, M. E. (1973). On the existence
of a consistent course of action when tastes are changing. \textit{The
Review of Economic Studies}, 391--401.
\item [{\textmd{Perold,}}] A. F. (1984). Large-scale portfolio optimization.
\textit{Management Science}, 30(10), 1143--1160.
\item [{\textmd{Pollak,}}] R. A. (1968). Consistent planning. \textit{The
Review of Economic Studies}, 201--208.
\item [{\textmd{Richardson,}}] H. R. (1989). A minimum variance result
in continuous trading portfolio optimization. \textit{Management Science},
35(9), 1045--1055.
\item [{\textmd{Strotz,}}] R. H. (1955). Myopia and inconsistency in dynamic
utility maximization. \textit{The Review of Economic Studies}, 23,
165--180.
\item [{\textmd{Wang,}}] J., and Forsyth, P. A. (2010). Numerical solution
of the Hamilton--Jacobi--Bellman formulation for continuous time mean
variance asset allocation. \textit{Journal of Economic Dynamics and
Control}, 34(2), 207--230.
\item [{\textmd{Wang,}}] J., and Forsyth, P. A. (2011). Continuous time
mean variance asset allocation: a time-consistent strategy. \textit{European
Journal of Operational Research}, 209(2), 184--201.
\item [{\textmd{Wang,}}] J., and Forsyth, P. A. (2012). Comparison of mean
variance like strategies for optimal asset allocation problems. \textit{International
Journal of Theoretical and Applied Finance}, 15(02), 1250014.
\item [{\textmd{Wei,}}] J., Wong, K. C., Yam, S. C. P., and Yung, S. P.
(2013). Markowitz\textquoteright s mean--variance asset-liability
management with regime switching: a time-consistent approach. \textit{Insurance:
Mathematics and Economics}, 53(1), 281--291.
\item [{\textmd{Xia,}}] J. (2005). Mean--variance portfolio choice: quadratic
partial hedging. \textit{Mathematical Finance}, 15(3), 533--538.
\item [{\textmd{Zeng,}}] Y., Li, Z., and Lai, Y. (2013). Time-consistent
investment and reinsurance strategies for mean--variance insurers
with jumps. \textit{Insurance: Mathematics and Economics}, 52(3),
498--507.
\item [{\textmd{Zenios,}}] S. A., and Kang, P. (1993). Mean--absolute deviation
portfolio optimization for mortgage-backed securities. \textit{Annals
of Operations Research}, 45(1), 433--450.
\item [{\textmd{Zhou,}}] X. Y., and Li, D. (2000). Continuous-time mean--variance
portfolio selection: a stochastic LQ framework. \textit{Applied Mathematics
and Optimization}, 42(1), 19--33.
\item [{\textmd{Zhou,}}] X. Y., and Yin, G. (2003). Markowitz's mean--variance
portfolio selection with regime switching: a continuous-time model.
\textit{SIAM Journal on Control and Optimization}, 42(4), 1466--1482.\end{description}

\end{document}